\newcommand{\Amat}{{\bf A}}
\newcommand{\Bmat}{{\bf B}}
\newcommand{\Dmat}{{\bf D}}
\newcommand{\Imat}{{\bf I}}
\newcommand{\Smat}{{\bf S}}
\newcommand{\Tmat}{{\bf T}}
\newcommand{\Umat}{{\bf U}}
\newcommand{\Vmat}{{\bf V}}
\newcommand{\Xmat}{{\bf X}}
\newcommand{\Ymat}{{\bf Y}}
\newcommand{\Zmat}{{\bf Z}}
\newcommand{\evec}{{\bf e}}
\newcommand{\svec}{{\bf s}}
\newcommand{\xvec}{{\bf x}}
\newcommand{\yvec}{{\bf y}}
\newcommand{\zvec}{{\bf z}}
\newcommand{\zerovec}{{\bf 0}}
\newcommand{\vect}{\textrm{vec}}
\newcommand{\Atilde}{\tilde{{\bf A}}}
\newcommand{\Ytilde}{\tilde{{\bf Y}}}
\newcommand{\Gammamat}{{\bf \Gamma}}
\newcommand{\Sigmamat}{{\bf \Sigma}}
\newcommand{\epsilonvec}{{\bf \epsilon}}
\newtheorem{thm}{Theorem}
\newtheorem{propos}{Proposition}
\begin{document}
\title{Blind Compressed Sensing Over a\\Structured Union of Subspaces}


\author{Jorge~Silva,~\IEEEmembership{Member,~IEEE,}
	Minhua~Chen,
	Yonina C. Eldar,~\IEEEmembership{Senior Member,~IEEE,}
	Guillermo Sapiro,~\IEEEmembership{Senior Member,~IEEE,}
	and Lawrence Carin,~\IEEEmembership{Fellow,~IEEE}%
\thanks{J. Silva is with the Department of Electrical and Computer Engineering,
Duke University, Durham, NC 27708-0291 USA (e-mail: jg.silva@duke.edu).}%
\thanks{M. Chen is with the Department of Electrical and Computer Engineering,
Duke University, Durham, NC 27708-0291 USA (e-mail: minhua.chen@ee.duke.edu).}%
\thanks{Y. C. Eldar is with the Department Electrical Engineering, The
Technion—Israel Institute of Technology, Haifa 32000, Israel (e-mail:
yonina@ee.technion.ac.il).}%
\thanks{G. Sapiro is with the Department of Electrical and Computer Engineering, University of Minnesota, Minneapolis, MN 55455 USA (e-mail: guille@umn.edu).}%
\thanks{L. Carin is with the Department of Electrical and Computer Engineering,
Duke University, Durham, NC 27708-0291 USA (e-mail: lcarin@ee.duke.edu).}}

\maketitle

\begin{abstract}
This paper addresses the problem of simultaneous signal recovery and dictionary learning based on compressive measurements. Multiple signals are analyzed jointly, with multiple sensing matrices, under the assumption that the unknown signals come from a union of a small number of disjoint subspaces. This problem is important, for instance, in image inpainting applications, in which the multiple signals are constituted by (incomplete) image  patches taken from the overall image. This work extends standard dictionary learning and block-sparse dictionary optimization, by considering compressive measurements (\emph{e.g.}, incomplete data). Previous work on blind compressed sensing is also generalized by using multiple sensing matrices and relaxing some of the restrictions on the learned dictionary. Drawing on results developed in the context of matrix completion, it is proven that both the dictionary and signals can be recovered with high probability from compressed measurements. The solution is unique up to block permutations and invertible linear transformations of the dictionary atoms. The recovery is contingent on the number of measurements per signal and the number of signals being sufficiently large; bounds are derived for these quantities. In addition, this paper presents a computationally practical algorithm that performs dictionary learning and signal recovery, and establishes conditions for its convergence to a local optimum. Experimental results for image inpainting demonstrate the capabilities of the method.
\end{abstract}

\section{Introduction}
The problem of learning a dictionary for a set of signals has received considerable attention in recent years. This problem is known to be ill-posed in general, unless constraints are imposed on the dictionary and signals. One such constraint is the assumption that the signals $\xvec_i\in\mathbb{R}^n, i=1,\dots,N$, can be sparsely represented under an unknown dictionary, $i.e.$, each vector $\xvec_i$ can be written as
\begin{align}
\xvec_i=\Dmat\svec_i+\epsilonvec_i,
\end{align}
where $\Dmat\in\mathbb{R}^{n\times r}$ is the dictionary, and $\svec_i\in\mathbb{R}^r$ are sparse coefficient vectors satisfying $\|\svec_i\|_0
\ll r$. The residual energy $\|\epsilonvec_i\|_2^2$ is assumed to be small. For example, \cite{aharon2006uniqueness} derives conditions to ensure the uniqueness of $\Dmat$ and the representations $\svec_i$, for the case in which all the coordinates of the signals $\xvec_i$ are observed. In this setting, the problem is known as dictionary learning (DL) or, sometimes, \emph{collaborative} DL to emphasize the fact that multiple signals are considered. Note that, in DL and in most related literature, ``uniqueness'' is defined up to an equivalence class involving permutations and rotations of the dictionary atoms; 
we follow the same convention throughout this paper.
In the DL setting, several algorithms have been proposed for estimating $\Dmat$ and $\svec_i$. Examples includes K--SVD \cite{aharon2006uniqueness} and the method of optimal directions (MOD) \cite{mairal2008discriminative}, both of which enjoy local convergence properties. More recently, \cite{wrightL1} has derived local convergence guarantees for $\ell_1$ minimization applied to DL when the signals are sparse.

A more structured type of sparsity is considered in block--sparse dictionary optimization (BSDO) \cite{rosenblum2010dictionary}, in which it is assumed that the nonzero coordinates of $\svec_i$ (active atoms) occur in blocks rather than in arbitrary locations. This property is called block sparsity \cite{eldar2010block}, and is important for the analysis of signals that belong to a union of a small number of subspaces, as described in \cite{eldar2009robust}.  The standard BSDO framework, like DL, assumes that  all coordinates of the signals $\xvec_i$ are observed. The block structure, $i.e.$, the number of atoms and composition of each block of the dictionary, is in general not known \emph{a priori} and should be estimated as part of the DL process. The BSDO algorithm reduces to K-SVD when the block size is one (standard sparsity).

While, technically, the set of all $k$--sparse signals in $\mathbb{R}^r$ is itself a union of $r \choose k$ subspaces, it greatly simplifies the problem if one considers the more structured setting of block sparsity. This reduces the number of subspaces to $L \choose K$, where $L$ is the total number of blocks in the dictionary and $K$ is the number of blocks that are active (block sparsity reduces to standard sparsity when all blocks are singletons). Block sparsity is closely related to the group LASSO in statistics literature \cite{meier2008group}, and also to the mixture of factor analyzers (MFA) statistical model, as noted in \cite{chen2009mfa}. The particular case for which only one block is active is called one--block sparsity and corresponds to a union of ${L \choose 1} = L$ possible subspaces, which is a dramatic simplification compared to the unstructured sparsity case. While this might at first sound limiting, the authors in \cite{yu2010solving} have obtained state--of--the--art image restoration results with a one--block sparsity model, and a variety of alternative methods have been proposed for this special case \cite{yu2010solving,yu2010statistical,elhamifar2009sparse}.

In applications we often do not have access to data $\xvec_i$, but only to compressive measurements
\begin{align}
	\yvec_i=\Amat\xvec_i=\Amat\Dmat\svec_i,
\end{align}
where $\Amat\in\mathbb{R}^{m\times n}$ has $m<n$ rows, so that $\yvec_i\in\mathbb{R}^m$. In order to enable recovery of $\svec_i$ from $\yvec_i$ even when $\Dmat$ is known, the sensing matrix $\Amat$ must satisfy incoherence properties with respect to $\Dmat$, as prescribed by compressed sensing (CS) theory  \cite{candes2006cs,donoho2006cs}. Learning $\Dmat$ from compressive measurements is called blind compressed sensing (blind CS), and does not admit a unique solution unless additional structure is imposed on $\Dmat$ \cite{gleichman2010blind}. See also the approach in \cite{duarte2009learning} for simultaneous dictionary learning and sensing matrix design in the CS scenario.

In this paper, we address simultaneous estimation of one--block--sparse signals and the corresponding dictionary, given only compressive measurements. This unifies blind CS and BSDO (for the one--block--sparse case in particular). It is known that the standard blind CS problem, where a fixed sensing matrix $\Amat$ is used, does not admit a unique solution in general, although a number of special cases of dictionaries for which such a solution exists have been identified in \cite{gleichman2010blind}. These special cases are: (i) finite sets of bases ($i.e.$, it is known that the dictionary is one member of a finite set of known bases for $\mathbb{R}^n$ ); (ii) sparse dictionaries ($i.e.$, the atoms of the dictionary themselves admit sparse representations) and (iii) block--diagonal dictionaries, with each block composed of orthogonal columns.

We are motivated in part by the problem of inpainting and interpolating an image \cite{bertalmio2000image,zhou2009non}, where one observes an incomplete image, \emph{i.e.}, we only know the intensity values at a subset of pixel locations (or a subset of their linear combinations). Additionally, the image is processed in (often overlapping) patches, which we convert to $n$--dimensional vectors (our signals of interest). We observe $m_i<n$ pixels in each patch, indexed by $i$, with these $m_i$ pixels selected at random. Therefore, the locations of the missing pixels are in general (at least partially) different for each patch. This means that, unlike the classical blind CS setting, the sensing matrix is not the same for all signals; we denote the sensing matrix for patch $i$ as $\Amat_i\in\mathbb{R}^{m_i\times n}$. The assumption of multiple and distinct $\Amat_i$ is crucial for solving the problem of interest, as it enables the use of existing results from matrix completion \cite{recht2009simpler}. Moreover, it has been demonstrated in \cite{yu2010statistical} that using multiple $\Amat_i$ improves inpainting and interpolation performance.

In conventional image inpainting, each $\Amat_i$ consists of a randomly chosen subset of $m_i$ rows of the $n\times n$ identity matrix (thereby selecting $m_i$ pixels). Successful estimation of the missing pixel intensity values is contingent on each patch being representable in terms of a small subset of the columns in the dictionary $\Dmat$ \cite{zhou2009non}. In other words, there needs to exist some $\Dmat$ such that the $\svec_i$ are (at least approximately) sparse. However, our analysis is not restricted to $\Amat_i$ being defined as in conventional inpainting. Other constructions typically used in CS, such as matrices with i.i.d. random--subgaussian \cite{rudelson2008sparse}  entries, or with rows drawn uniformly at random from an orthobasis \cite{candes2006cs}, can be employed. Moreover, in many settings it is appropriate to assume that the patches belong to a union of disjoint subspaces \cite{yu2010solving,ramirez2010classification}, with the number of subspaces being small. This motivates the one--block--sparsity assumption underlying our theoretical and computational results. This assumption, coupled with the use of multiple $\Amat_i$, allows us to ensure recovery of $\Dmat$ and the $\svec_i$ under milder conditions on $\Dmat$ and on the number of vectors $\yvec_i$, as compared to standard blind CS.

We show that unique recovery of the dictionary and the one--block--sparse signals is guaranteed with high probability, albeit with high computational effort, if the number of measurements per signal and the number of signals are sufficiently large. We derive algorithm--independent bounds for these quantities, thereby extending DL by considering compressive measurements and establishing a connection with matrix--completion theory.  Our results reduce to those known for DL when the signals are fully observed.

Additionally, we present a computationally feasible algorithm that performs DL and recovery of block--sparse signals based on compressive measurements with multiple sensing matrices. We automatically learn the block structure, in the same way as the BSDO algorithm \cite{rosenblum2010dictionary}; the size (number of atoms) and composition of each block is not known \emph{a priori} and we only need to specify a maximum block size. Our algorithm is closely related to BSDO, the main differences being that one--block sparsity and compressive measurements are considered. The estimates of $\Dmat$, $\svec_i$ and the block structure are found by alternating least-squares minimization.

It is shown that the algorithm converges to a local optimum under mild conditions, which we derive. This convergence analysis is not available for most other one--block--sparsity promoting methods, such as \cite{chen2009mfa}, or for most methods that rely on standard sparsity, such as \cite{zhou2009non}. An exception is the analysis in \cite{wrightL1} pertaining to $\ell_1$ minimization, where local convergence is proven when the number of measurements is at least $O(n^3 k)$ for signals with sparsity level $k$. However, compressive measurements are not considered. Besides our global uniqueness result, our method provably attains local convergence for, at most, $O(n k \log n)$ measurements, although our one--block sparsity assumption is stronger than in \cite{wrightL1}. The approach proposed in \cite{yu2010solving} also converges, but involves an even stronger assumption akin to intra--block sparsity, and does not include an algorithm--independent uniqueness analysis. Compelling experimental results are presented below, demonstrating the ability of our algorithm to perform inpainting of real images with a significant fraction of missing pixels.

The remainder of the paper is organized as follows. Section \ref{sec:optim} presents preliminary definitions and a formulation of the optimization problem. Our uniqueness result is presented in Section \ref{sec:uniqueness}. Sections \ref{sec:multiple_A} and \ref{sec:conv_proof} respectively describe the proposed algorithm and the corresponding proof of convergence to a local optimum. Experimental results are described in Section \ref{sec:results} and concluding remarks are given in Section \ref{sec:conclusion}.

\section{Problem formulation}
\label{sec:optim}

\subsection{Preliminaries}
Assume vectors $\yvec_i\in\mathbb{R}^{m_i}, i=1,\dots,N$ are observed, such that
\begin{align}
    \yvec_i=\Amat_i\xvec_i=\Amat_i(\Dmat\svec_i+\epsilonvec_i)
    \label{eq:sensing}
\end{align}
where $\xvec_i\in\mathbb{R}^n$ is an unknown signal, $\Amat_i\in\mathbb{R}^{m_i\times n}$ is a known sensing matrix, $\Dmat\in\mathbb{R}^{n\times r}$ is an unknown dictionary, $\svec_i\in\mathbb{R}^r$ is an unknown sparse vector of coefficients such that $\xvec_i=\Dmat\svec_i+\epsilonvec_i$, with small residual $\|\epsilonvec_i\|_2$. We focus on the noiseless case, although our analysis can be extended straightforwardly to include observation noise, following \cite{candes2010noise}. Given $\yvec_i$ we would like to estimate $\xvec_i$ by finding $\Dmat$ and $\svec_i$. Thus, we are interested in solving 
\begin{align}
    \min_{\Dmat,\svec_1,\dots,\svec_N} &\sum_{i=1}^N \left\| \yvec_i-\Amat_i \Dmat\svec_i \right\|_2^2,\label{eq:ADS_opt}\\
 \textnormal{s.t.}~~~~&\svec_i~~\textnormal{one--block sparse}\nonumber
\end{align}
with our estimates denoted $\widehat{\Dmat}, \widehat{\svec}_1,\dots,\widehat{\svec}_N$.

It is assumed that each $\xvec_i$ lives in a subspace specified by a subset of the columns of $\Dmat$, so that there exists a permutation of the columns such that $\xvec_i=\Dmat\svec_i$ and the coefficient vector $\svec_i$ is one--block sparse, as defined below. It is also assumed that $\Dmat$ is composed of blocks (subsets of columns) corresponding to the blocks of $\svec_i$, with the cardinality of the blocks summing to $r$. The cardinality and composition of each block are unknown, although we fix a maximum cardinality as in BSDO. The atoms in each block are assumed orthonormal, in order to avoid scaling indeterminacies and also to obviate concerns with  sub--coherence (see \cite{eldar2010block} for a discussion). The expression in (\ref{eq:ADS_opt}), particularly the $\ell_2$ norm, is equivalent to a maximum-likelihood solution for the unknown model parameters, assuming that the components of $\epsilonvec_i$ are i.i.d. Gaussian and the model deviation is negligible.

\noindent {\bf Definition 1 (Block sparsity and one--block sparsity).} Let the dictionary $\Dmat\in\mathbb{R}^{n\times r}$ have a block structure such that $\Dmat=[\Dmat[1] \cdots \Dmat[L]]$, where each $\Dmat[\ell], \ell\in\{1,\dots,L\}$ is a unique subset of the columns in $\Dmat$, and the columns of $\Dmat[\ell]$ are assumed orthonormal for all $\ell$. Following \cite{eldar2010block}, a signal $\svec_i$ is $K$--block--sparse under dictionary $\Dmat$ if it admits a corresponding block structure $\svec_i=[\svec_i[1]^T~\cdots~\svec_i[L]^T]^T$ and if $\svec_i$ has zero entries everywhere except at a subset $\svec_i[\ell_1],\dots,\svec_i[\ell_K]$ of size $K\ll L$, corresponding to blocks $\Dmat[\ell_1],\dots,\Dmat[\ell_K]$ of $\Dmat$. Each dictionary block $\Dmat[\ell]\in\mathbb{R}^{n\times k_\ell}$ and each coefficient block $\svec_i[\ell]\in\mathbb{R}^{k_\ell}$. In general, the block size $k_\ell$ is not known \emph{a priori}, although it is common to define a maximum size $k_{\textnormal{max}}$.

We are specifically interested in one--block sparsity, where $K=1$. An interpretation of one--block sparsity is that it corresponds to signals that live in a union of $L$ linear subspaces, with each subspace spanned by the columns of one block $\Dmat[\ell]$. If block $\Dmat[\ell]$ has $k_\ell$ columns, then its subspace is of dimension $k_\ell$.  Figure \ref{fig:blocksparse} illustrates the concept of one--block sparsity. The index set of all signals which use block $\ell$ is $\omega_\ell=\{i: \svec_i[\ell]\neq \zerovec\}$. Note that for one--block--sparse signals there can be no overlap between $\omega_{\ell_1}$ and $\omega_{\ell_2}$ for $\ell_1\neq \ell_2$.
\begin{figure}[htb!]
    \centering
    \includegraphics[width=3.5in]{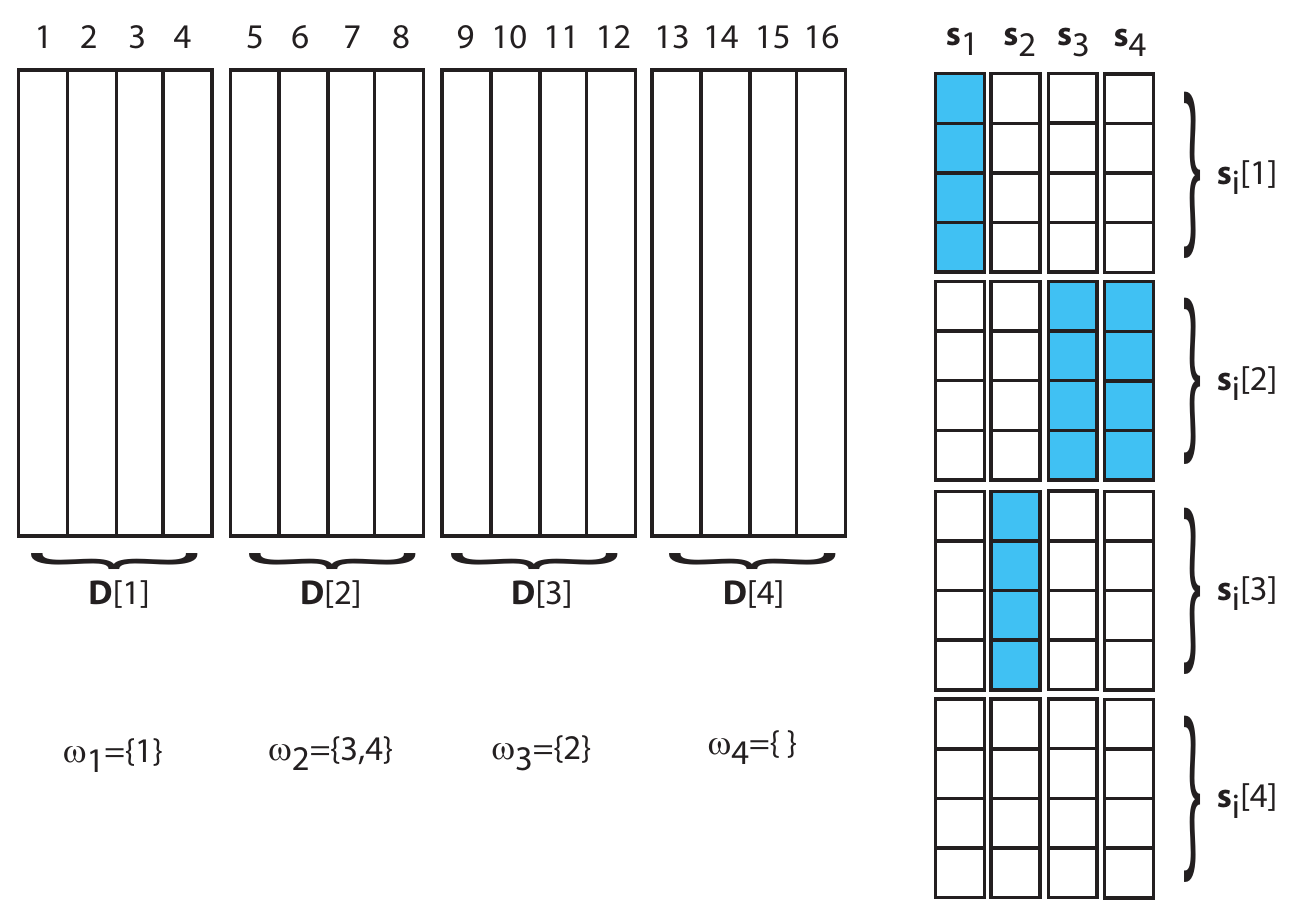}
    \caption{Illustration of the concept of one--block sparsity. The figure represents a dictionary with $r=16$ columns, arranged consecutively in blocks $\Dmat[1],\dots,\Dmat[4]$, all of size four. Also represented is a coefficient matrix with columns $\svec_1,\svec_2$, $\svec_3$ and $\svec_4$. Non--zero coefficient values are colored, while zero--valued coefficient values are blank. All signals are one--block sparse, as they only use one block each. Note that there exists no overlap between index sets $\omega_1, \omega_2$, $\omega_3$ and $\omega_4$.}
    \label{fig:blocksparse}
\end{figure}

The optimization problem (\ref{eq:ADS_opt}) can be decomposed as
\begin{align}
    \min_{\omega_\ell,\Dmat[\ell],\svec_1[\ell],\dots,\svec_N[\ell]} &~\sum_{i\in\omega_\ell}\left\| \yvec_i - \Amat_i \Dmat[\ell] \svec_i[\ell] \right\|_2^2,    \label{eq:ADS_opt_ell}\\
    \textnormal{s.t.} &~~~~~~~~\Dmat[\ell]^T\Dmat[\ell]=\Imat,\nonumber
\end{align}
for all $\ell$, where $\svec_i[\ell]\in\mathbb{R}^{k_\ell}$ is the $\ell$--th block of $\svec_i$. The solution to problem (\ref{eq:ADS_opt}) is only unique up to the following equivalence class.

\noindent {\bf Definition 2 (Equivalence class for $\widehat{\Dmat},\widehat{\svec}_1,\dots,\widehat{\svec}_N$).}
Given a solution $(\widehat{\Dmat},\widehat{\svec}_1,\dots,\widehat{\svec}_N)$ to (\ref{eq:ADS_opt}), with $\widehat{\Dmat}=[\widehat{\Dmat}[1] \cdots \widehat{\Dmat}[L]]$ and $\widehat{\svec}_i=[\widehat{\svec}_i[1]^T~\cdots~\widehat{\svec}_i[L]^T]^T$, $i=1,\dots,N$, any permutation $\ell_1^\prime,\dots,\ell_L^\prime$ of the indices $1,\dots,L$ corresponds to an equivalent solution $(\widehat{\Dmat}^\prime,\widehat{\svec}_1^\prime,\dots,\widehat{\svec}^\prime_N)$ of the form $\widehat{\Dmat}^\prime=[\widehat{\Dmat}[\ell_1^\prime] \cdots \widehat{\Dmat}[\ell_L^\prime]]$ and $\widehat{\svec}_i^\prime=[\widehat{\svec}_i[\ell_1^\prime]^T~\cdots~\widehat{\svec}_i[\ell_L^\prime]^T]^T$, $i=1,\dots,N$. Additionally, any sequence of invertible matrices $\Tmat_\ell\in\mathbb{R}^{k_\ell\times k_\ell}$, $\ell=1,\dots,L$, where $k_\ell$ is the size of the $\ell$--th block, also corresponds to an equivalent solution of the form $\widehat{\Dmat}^\prime=[\widehat{\Dmat}^\prime[1] \cdots \widehat{\Dmat}^\prime[L]]$ and $\widehat{\svec}_i^\prime=[(\widehat{\svec}_i[1]^\prime)^T~\cdots~(\widehat{\svec}_i[L]^\prime])^T]^T$, $i=1,\dots,N$, with $\Dmat[\ell]^\prime=\Dmat[\ell]\Tmat_\ell$ and $\svec_i[\ell]^\prime=\Tmat_\ell^{-1}\svec_i[\ell]$.

Our first goal is to develop conditions under which (\ref{eq:ADS_opt}) admits a unique solution. To this end we note that the index sets $\omega_\ell$ are unknown, and therefore estimates $\widehat{\omega}_\ell$ must be found. This is due to the fact that, \emph{a priori}, we do not know the correct assignments of columns to blocks. In our uniqueness analysis, we follow the same strategy as \cite{aharon2006uniqueness} and assume that all possible index sets can be tested. This is admittedly impractical in general, and it is done only for the purpose of constructing the uniqueness proof. The algorithm proposed in Section \ref{sec:multiple_A} provides estimates of $\omega_\ell$ without exhaustive search.

For a given $\omega_\ell$, (\ref{eq:ADS_opt_ell}) is an instance of the weighted orthonormal Procrustes problem (WOPP), which in general requires iterative optimization without guarantee of convergence to a global optimum \cite{mooijaart1990general}. In the following, we establish conditions under which uniqueness can be guaranteed. To this end we show that (\ref{eq:ADS_opt_ell}) can be written as a matrix--completion problem, and then rely on known results developed in that context. 

\subsection{Formulation as a matrix completion problem}
\label{sec:mc_formul}

To reformulate (\ref{eq:ADS_opt_ell}), define $\Xmat_{\omega_\ell}\in\mathbb{R}^{n\times |\omega_\ell|}$ as the matrix whose columns are the signals $\xvec_i$ that use block $\ell$, $i.e.$, with $i\in\omega_\ell$, and similarly define $\Smat_{\omega_\ell}[\ell]\in\mathbb{R}^{k_\ell\times |\omega_\ell|}$ as the matrix whose columns are the coefficient blocks $\svec_i[\ell]$ for $i\in\omega_\ell$. Since $\Dmat[\ell]$ has only $k_\ell$ columns and $\Smat_{\omega_\ell}[\ell]$ has $k_\ell$ rows, it is clear that $\Xmat_{\omega_\ell}=\Dmat[\ell]~\Smat_{\omega_\ell}[\ell]$ has rank at most $k_\ell$. In fact, unless the null space of $\Dmat[\ell]$ intersects the range of $\Smat_{\omega_\ell}$, the rank is exactly $k_\ell$. This should not happen in general, barring degeneracies that are precluded by the conditions assumed in our results. This suggests the strategy of treating each subproblem (\ref{eq:ADS_opt_ell}), assuming $\omega_\ell$ is known and hence the subspace selection has been determined (in our method we address estimation of $\omega_\ell$), as a low--rank matrix completion problem, as we illustrate next.

Define $\Atilde\in\mathbb{R}^{M\times n}$  with $M\geq n$, as the matrix constructed by taking the union of the unique rows of all sensing matrices $\Amat_i$, for $i\in\omega_\ell$ 
(in the interest of notation brevity, we omit the dependence of $\Atilde$ on $\ell$). For example, in the aforementioned inpainting problem, for which each $\Amat_i$ is defined by selecting rows at random from the $n\times n$ identity matrix, denoted $\Imat_{n\times n}$, we have $\Atilde$ equal to a subset of $\Imat_{n\times n}$ and in the limit, given enough $\Amat_i$, $\tilde{\Amat}=\Imat_{n\times n}$ (up to row permutation). However, the discussion below considers more-general $\Amat_i$, for example composed in terms of draws from a subgaussian distribution.  Let $\Ymat_{\omega_\ell}\in\mathbb{R}^{M\times |\omega_\ell|}$ be defined as $\Ymat_{\omega_\ell}=\Atilde~\Xmat_{\omega_\ell}$. When performing measurements, we do not observe all elements of $\Ymat_{\omega_\ell}$; for each column $i\in\omega_\ell$, we only have access to the entries selected by $\Amat_i$, these corresponding to the matching observed vector $\yvec_i$. This is illustrated in Figure \ref{fig:matrixobs}, which also shows a pictorial comparison with the cases of fully measured signals (DL) and a single measurement matrix $\Amat$ (standard blind CS).
\begin{figure*}[htb!]
    \centering
    \includegraphics[width=5.5in]{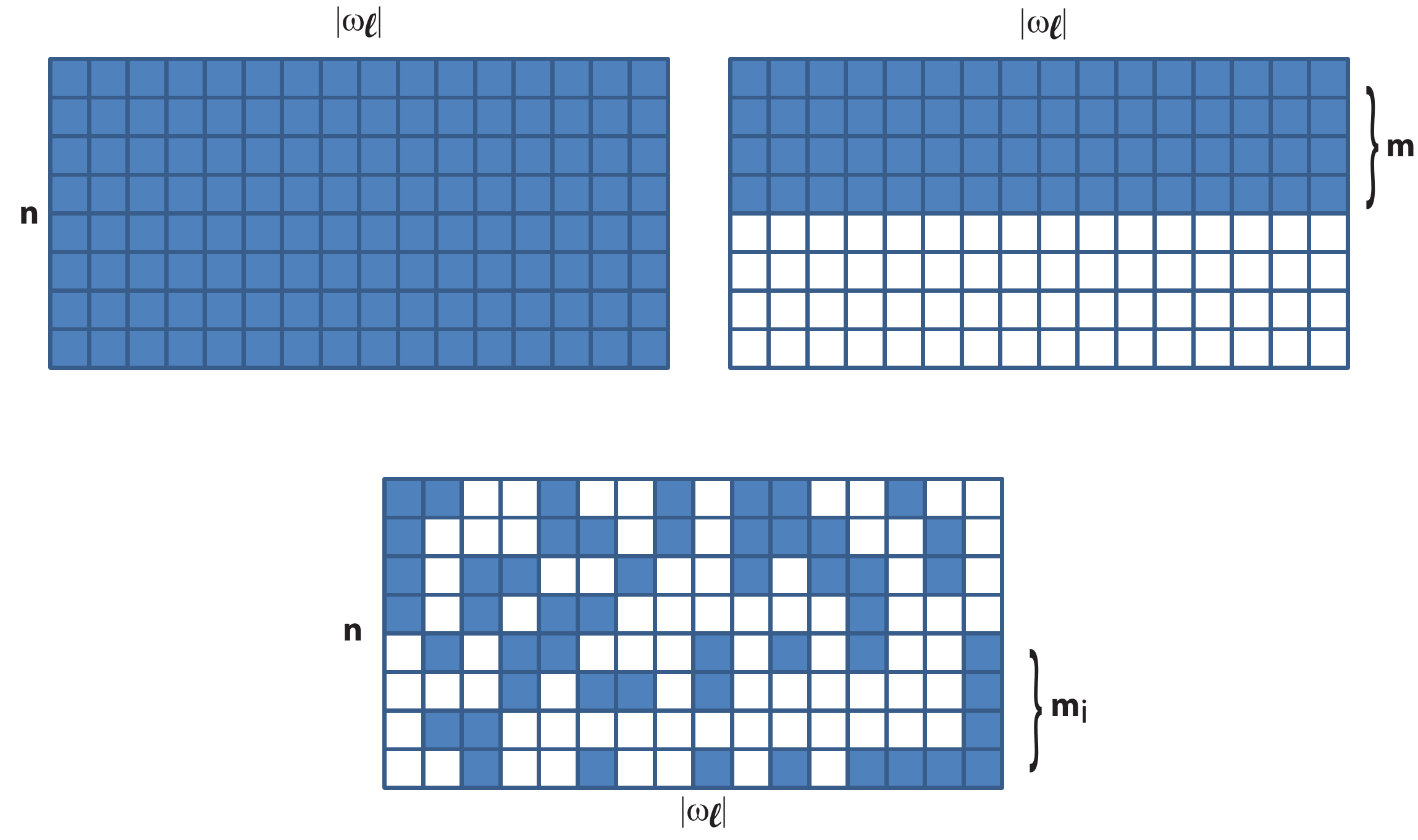}
    \caption{Illustration of the observed elements of $\Ymat_{\omega_\ell}$ with multiple measurement matrices (bottom), compared with the DL (top left) and standard blind CS (top right) cases. The colored squares represent observed elements, while blank squares represent unobserved elements of $\Ymat_{\omega_\ell}$. Here we assume that $\Ymat_{\omega_\ell}$ has $M=n$ rows.}
    \label{fig:matrixobs}
\end{figure*}

Denoting the locations of the observed entries of $\Ymat_{\omega_\ell}$ as
\begin{align}
	\Omega=\{(u,v): Y_{\omega_\ell}(u,v)~\textnormal{is observed}\},
\end{align}
the observation model can be written as
\begin{align}
    P_{\Omega}(\Ymat_{\omega_\ell})=P_{\Omega}(\Atilde\Xmat_{\omega_\ell})=P_{\Omega}(\Atilde\Dmat[\ell]\Smat_{\omega_\ell}[\ell]),
    \label{eq:sensing}
\end{align}
where $P_{\Omega}(\cdot)$ is the operator that extracts the values of its argument at locations indexed by $\Omega$. Each subproblem (\ref{eq:ADS_opt_ell}) can then be reformulated as
\begin{align}
    \min_{\omega_\ell,\Dmat[\ell],\Smat_{\omega_\ell}[\ell]} &~\left\| P_{\Omega}(\Ymat_{\omega_\ell})-P_{\Omega}(\Atilde\Dmat[\ell]\Smat_{\omega_\ell}[\ell]) \right\|_F    		\label{eq:MC_opt_ell}\\
    \textnormal{s.t.}&~~~~~~~~\Dmat[\ell]^T\Dmat[\ell]=\Imat,\nonumber
\end{align}

\noindent where $\|\cdot\|_F$ is the Frobenius norm.  If $\Xmat_{\omega_\ell}$ has rank $k_\ell$ (the block size), assuming $\tilde{\Amat}$ has rank greater or equal to $k_\ell$ and does not reduce the rank of $\Xmat_{\omega_\ell}$ (since $k_\ell$ is generally small and $\tilde{\Amat}$ is assumed to have rank $n$, these conditions for $\tilde{\Amat}$ are anticipated to be typical), we have that $\Ymat_{\omega_\ell}$ is also of rank $k_\ell$. This establishes the fact that, when $\Xmat_{\omega_\ell}$ has low rank, so does $\Ymat_{\omega_\ell}$ (matrix completion results are most useful when the rank is low, although they hold for any value of the rank). Each subproblem (\ref{eq:MC_opt_ell}) can then be 
solved by first completing the matrix $\Ymat_{\omega_\ell}$ and then obtaining $\Xmat_{\omega_\ell}$ and, subsequently, $\Dmat[\ell]$ and $\Smat_{\omega_\ell}[\ell]$. Note that, although this is a two--step process, if $\Atilde$ has rank $n$ then it can be inverted to find a unique mapping from any estimated complete matrix $\Ymat_{\omega_\ell}$ to a corresponding estimated $\Xmat_{\omega_\ell}$. As long as $\Xmat_{\omega_\ell}$ can be correctly estimated, the solution to (\ref{eq:MC_opt_ell}) can then be found by a singular value decomposition (SVD) of $\Xmat_{\omega_\ell}$. This is due to the fact that, for $\textrm{rank}(\Dmat[\ell]\Smat_{\omega_\ell})=k_\ell$, SVD minimizes the Frobenius norm of the residual $\Xmat_{\omega_\ell}-\Dmat[\ell]\Smat_{\omega_\ell}[\ell]$ w.r.t. $\Dmat[\ell]$ and $\Smat_{\omega_\ell}[\ell]$, by the Eckart-Young theorem \cite{eckart1936approximation}.

Following matrix completion theory \cite{recht2007guaranteed,candes2009exact,candes2010power,recht2009simpler}, if $\Ymat_{\omega_\ell}$ is truly of low rank (and additional technical conditions are met), it can be correctly completed with high probability by solving the convex program
\begin{align}
	\textnormal{minimize}~~~& \|\widehat{\Ymat}_{\omega_\ell}\|_* \label{eq:nuclear}\\
	\textnormal{s.t.}~~~&  P_{\Omega}(\widehat{\Ymat}_{\omega_\ell})=P_{\Omega}(\Ymat_{\omega_\ell}),\notag
\end{align}
where $\|\cdot\|_*$ is the \emph{nuclear norm}, which is defined for a generic matrix $\Zmat$ of rank $k$ as the sum of its singular values, $i.e.$,
\begin{align}
	\|\Zmat\|_*=\sum_{i=1}^k \gamma_i(\Zmat),
\end{align}
with $\gamma_i(\Zmat)$ indicating the $i$--th singular value of $\Zmat$. Importantly, problem (\ref{eq:nuclear}) is \emph{not} equivalent to (\ref{eq:MC_opt_ell}), since a solution of (\ref{eq:MC_opt_ell}) may not be a solution of  (\ref{eq:nuclear}). This is a reflection of the fact that there may exist a high--rank $\widehat{\Ymat}_{\omega_\ell}$ which solves  (\ref{eq:MC_opt_ell}) but has high nuclear norm. However, the opposite holds: a solution of  (\ref{eq:nuclear}) is always a solution of  (\ref{eq:MC_opt_ell}), and therefore of (\ref{eq:ADS_opt_ell}). Moreover, the solution of  (\ref{eq:nuclear}) will have low rank and therefore produce $\Dmat[\ell]$ with the smallest possible block size $k_\ell$, which is clearly desirable. Thus, the nuclear norm formulation yields those solutions of the original problem (\ref{eq:MC_opt_ell}) that are \emph{useful}.


In the following section, we state conditions on the number of observed entries and on $\Atilde$ for successful recovery of $\Ymat_{\omega_\ell}$, $\Xmat_{\omega_\ell}$ and, subsequently, of $\Dmat[\ell]$ and $\svec_1[\ell],\dots,\svec_N[\ell]$, assuming exhaustive search over $\omega_\ell$, by exploiting the connection between (\ref{eq:ADS_opt_ell}), (\ref{eq:MC_opt_ell}) and (\ref{eq:nuclear}).  In Section \ref{sec:multiple_A} we propose an algorithm that does not require exhaustive search and is guaranteed to converge to a local minimum under mild conditions here established. The notation used in this paper is summarized in Table 1 below, for ease of reference. 
\begin{table}[htb]\caption{Quick notation guide}
	\begin{tabular}{rl}
		\hline
		$\yvec_i\in\mathbb{R}^{m_i}$	&	Observed vector $i$\\
		$m_i$		&	Number of observed coordinates of vector $\yvec_i$\\
		$\xvec_i\in \mathbb{R}^n$ 	&	Unknown signal $i$\\ 
		$\svec_i\in\mathbb{R}^r$ 	& 	Sparse coefficient vector $i$\\
		$\Dmat\in\mathbb{R}^{n\times r}$		&	Dictionary\\
		$\Dmat[\ell]\in\mathbb{R}^{n\times k_\ell}$	&	Dictionary block $\ell$, i.e. the $\ell$-th subset of the\\ & columns of $\Dmat$\\
		$k_\ell$		&	Number of atoms in block $\ell$\\
		$\omega_\ell=\{i: \svec_i[\ell]\neq \zerovec\}$	&	Set of indexes of the signals that use block $\ell$\\ & of the dictionary $\Dmat$ with coefficients $\svec_i$\\
		$|\omega_\ell|$ 		&	Number of signals that use block $\ell$\\
		$\Xmat_{\omega_\ell}\in\mathbb{R}^{n\times |\omega_\ell|}$	&	Subset of the signals associated with block $\ell$\\
		$\Smat_{\omega_\ell}\in\mathbb{R}^{r\times |\omega_\ell|}$		&	 Subset of the coefficient vectors\\ & associated with block $\ell$\\
		$\Smat_{\omega_\ell}[\ell]\in\mathbb{R}^{k_\ell\times |\omega_\ell|}$		&	 Block $\ell$ of $\Smat_\ell$, i.e. the $\ell$-th subset of the rows,\\ & corresponding to $\Dmat[\ell]$\\
		$\svec_i[\ell]$ 	&	Block $\ell$ of sparse coefficient vector $\svec_i$\\ & ($\ell$-th subset of the rows)\\
		$\Amat_i\in\mathbb{R}^{m_i\times n}$	&	Sensing matrix for vector $i$\\
		$\Atilde\in\mathbb{R}^{M\times n}$		&	Union of the rows of multiple sensing matrices\\
		$\Ymat_{\omega_\ell}\in\mathbb{R}^{M\times |\omega_\ell|}$		&	Incompletely observed data matrix associated\\ & with block $\ell$\\
		$\Omega$	& 	Observed locations of $\Ymat_{\omega_\ell}$\\
		$\otimes$	&	Kronecker product\\
		\hline
	\end{tabular}
\end{table}

\section{Uniqueness result}
\label{sec:uniqueness}
Our results rely on the concepts of \emph{spark} and \emph{coherence}, which we now define. The spark $\sigma(\cdot)$ of a matrix is the smallest number of columns that are linearly \emph{dependent}; the number of linearly \emph{independent} columns is the rank. Additionally, let $\mathcal{U}$ be a subspace of $\mathbb{R}^n$ with dimension $k$ and spanned by vectors $\{\zvec_u\}_{u=1,\dots,k}$. The coherence of $\mathcal{U}$ \emph{vis--\`{a}--vis} the standard basis $\{\evec_v\}_{v=1,\dots,n}$ is defined as \cite{recht2009simpler}
\begin{align}
\mu(\mathcal{U})=\frac{n}{k}\max_{u,v} \|\zvec_u^T \evec_v\|^2.
\end{align}
This quantity is very similar to the coherence of a matrix, as defined in \cite{tropp2004greed}. Our results also build on the following DL uniqueness conditions.

\noindent{\bf Definition 3 (Uniqueness conditions for DL \cite{aharon2006uniqueness}).}
\begin{itemize}
		\item {\bf Support:} $\|\svec_i\|_0=k<\frac{\sigma(\Dmat)}{2}, \forall i$
		\item {\bf Richness:} There exist at least $k+1$ signals for every possible combination of $k$ atoms from $\Dmat$. With regular sparsity level $k$ , this amounts to at least $(k+1){ r \choose k}$ signals. For one--block sparsity with $L$ blocks of size $k_\ell,\ell=1,\dots,L$, however, we need $\sum_{\ell=1}^L(k_\ell+1) $ signals, which is typically a far smaller number.
		\item {\bf Non--degeneracy:} Given $k+1$ signals from the same combination of $k$ atoms, their rank is exactly $k$ (general position within the subspace). Similarly, any $k+1$ signals from different combinations must have rank $k+1$.
\end{itemize}
These conditions apply to the case of fully measured signals and constitute a limiting case, since our problem reduces to DL when all $\Amat_i$ are equal to the identity matrix. Even in this limiting case, however, the one--block sparsity condition has the advantage of greatly reducing the number of possible subspaces and therefore the required number of signals, as stated in the richness property definition above. Note that we do not require prior knowledge of the block size $k_\ell$, as explained below. We now present our main result.

\begin{thm}[{\bf Uniqueness conditions for blind CS with multiple measurement matrices}]
Let $\yvec_i\in\mathbb{R}^{m_i}$, with $i=1,\dots,N$, be a set of observed vectors, obtained through projections of unknown signals $\xvec_i\in\mathbb{R}^n$ so that each $\yvec_i=\Amat_i\xvec_i$ and $\Amat_i\in\mathbb{R}^{m_i\times n}$ is a known sensing matrix. Furthermore, let $\xvec_i=\Dmat\svec_i$ where $\svec_i\in\mathbb{R}^r$ is an unknown vector of coefficients obeying one--block sparsity according to Definition 1 and $\Dmat\in\mathbb{R}^{n\times r}$ is an unknown dictionary having a block structure such that $\Dmat=[\Dmat[1] \cdots \Dmat[L]]$, with $\Dmat[\ell]^T\Dmat[\ell]=\Imat$.  Let $k_\ell$ be the number of columns of block $\Dmat[\ell]$ . In addition, let $\omega_\ell$ be the index set of the vectors $\xvec_i$ for which the corresponding $\svec_i$ have block $\svec_i[\ell]$ active, $i.e.$, nonzero, and let $|\omega_\ell|$ denote the number of such vectors. 
Define $\Atilde\in\mathbb{R}^{M\times n}$ as the union of the unique rows of all $\Amat_i$ for all $i\in\omega_\ell$, so that $\Ymat_{\omega_\ell}\in\mathbb{R}^{M\times |\omega_\ell|}$ with $M\geq n$ has columns given by $\Atilde\xvec_i$, for $i\in\omega_{\ell}$, and only a subset of the elements in $\Ymat_{\omega_\ell}$ are observed. Define $M_{1\ell}=\min(M,|\omega_\ell|)$, $M_{2\ell}=\max(M,|\omega_\ell|)$ and let $\Ymat_{\omega_\ell}=\Umat\Sigmamat\Vmat^T$ be the singular value decomposition of $\Ymat_{\omega_\ell}$. Define $\mu_\ell=\max(\mu_1^2,\mu_0)$, where $\mu_{1}$ is an upper bound on the absolute value of the entries of $\Umat\Vmat^T\sqrt{(M_{1\ell} M_{2\ell})/k_\ell}$ and $\mu_{0}$ is an upper bound on the coherence of the row and column spaces of $\Ymat_{\omega_\ell}$.

Then, by solving problem (\ref{eq:nuclear}) one can exactly recover all the blocks of $\Dmat$ and the coefficient vectors $\svec_i$ up to the equivalence class presented in Definition 2, with probability at least $1-6\log(M_2)(M_1+M_2)^{2-2\beta}-M_2^{2-2\sqrt{\beta}}$ for some $\beta>1$, if the following conditions are met for each $\ell\in\{1,\dots,L\}$.
\begin{itemize}
	\item[(i)] For all $i\in\omega_\ell$, $\|\svec_i\|_0=k_\ell<\frac{\sigma(\Atilde\Dmat)}{2}$.
	\item[(ii)] $|\omega_\ell|>k_\ell$.
	\item[(iii)] The vectors $\xvec_i, i\in\omega_\ell,$ are non--degenerate, \emph{i.e.}, any subset of $k\leq k_\ell$ vectors span a subspace of rank at least $k$.
	\item[(iv)] $32 \mu_\ell k_\ell (M_{1\ell}+M_{2\ell})\beta \log (2 M_{2\ell})$ entries of the matrix $\Ymat_{\omega_\ell}$ are observed uniformly at random. The total number of observed entries is, thus, $\sum_{\ell=1}^{L} 32 \mu_\ell k_\ell (M_{1\ell}+M_{2\ell})\beta \log (2 M_{2\ell})$.
\end{itemize}
\end{thm}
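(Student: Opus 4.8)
The plan is to prove the theorem by exploiting the block decomposition together with the chain of reductions established in Section~\ref{sec:mc_formul}, namely (\ref{eq:ADS_opt})~$\to$~(\ref{eq:ADS_opt_ell})~$\to$~(\ref{eq:MC_opt_ell})~$\to$~(\ref{eq:nuclear}). Because the signals are one--block sparse, the index sets $\omega_\ell$ are pairwise disjoint, so the global recovery problem splits into $L$ independent subproblems, one per block. First I would argue that for each $\ell$ the matrix $\Ymat_{\omega_\ell}=\Atilde\Dmat[\ell]\Smat_{\omega_\ell}[\ell]$ has rank exactly $k_\ell$: condition (ii) supplies at least $k_\ell$ columns, condition (iii) forces these columns (hence $\Xmat_{\omega_\ell}$) to span a $k_\ell$--dimensional subspace, and since $\Atilde$ has rank $n$ it does not collapse this rank. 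This places each $\Ymat_{\omega_\ell}$ squarely in the low--rank regime where matrix completion applies.

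The core step is to invoke the matrix completion guarantee of \cite{recht2009simpler}. I would identify $n_1=M_{1\ell}$, $n_2=M_{2\ell}$, and target rank $k_\ell$, and then verify that the coherence hypotheses of that result are exactly what is encoded in $\mu_\ell=\max(\mu_1^2,\mu_0)$: here $\mu_0$ bounds the coherence of the row and column spaces of $\Ymat_{\omega_\ell}$, while $\mu_1$ bounds the entries of the normalized factor $\Umat\Vmat^T\sqrt{(M_{1\ell}M_{2\ell})/k_\ell}$. Given condition (iv), the number of uniformly sampled entries meets the threshold $32\mu_\ell k_\ell(M_{1\ell}+M_{2\ell})\beta\log(2M_{2\ell})$, so the nuclear--norm program (\ref{eq:nuclear}) recovers $\Ymat_{\omega_\ell}$ exactly with the per--block probability of \cite{recht2009simpler}. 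Taking a union bound over the $L$ disjoint blocks, and reading $M_1,M_2$ as the worst case over blocks so that the $L$ factor is absorbed, yields the stated overall success probability $1-6\log(M_2)(M_1+M_2)^{2-2\beta}-M_2^{2-2\sqrt{\beta}}$.

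Having recovered each $\Ymat_{\omega_\ell}$, I would then recover the dictionary and coefficients in two stages. Since $\Atilde$ has rank $n$, it admits a left inverse, so $\Xmat_{\omega_\ell}=\Atilde^\dagger\Ymat_{\omega_\ell}$ is determined uniquely; applying the Eckart--Young theorem \cite{eckart1936approximation}, the SVD of $\Xmat_{\omega_\ell}$ then furnishes an orthonormal $\Dmat[\ell]$ and coefficients $\Smat_{\omega_\ell}[\ell]$ solving (\ref{eq:MC_opt_ell}), determined up to the invertible transformations $\Tmat_\ell$ of Definition~2. To justify that the exhaustive search over candidate index sets returns the \emph{correct} $\omega_\ell$, I would appeal to the DL uniqueness argument of \cite{aharon2006uniqueness}, whose three hypotheses are precisely conditions (i)--(iii) instantiated on the effective dictionary $\Atilde\Dmat$: the support bound (i), $k_\ell<\sigma(\Atilde\Dmat)/2$, together with non--degeneracy (iii), ensures that any incorrect grouping of signals drawn from distinct subspaces produces a matrix of rank strictly greater than $k_\ell$, so the minimal--rank completion uniquely identifies the true blocks and the subspaces $\mathrm{range}(\Atilde\Dmat[\ell])$. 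Combining across all $\ell$ recovers $\Dmat$ and every $\svec_i$ up to block permutations, completing the equivalence class of Definition~2.

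I expect the main obstacle to be the coherence bookkeeping in the middle step: one must confirm that the incoherence parameters $\mu_0,\mu_1$ defined for $\Ymat_{\omega_\ell}$ are finite and that the specific normalization $\sqrt{(M_{1\ell}M_{2\ell})/k_\ell}$ makes the hypotheses of \cite{recht2009simpler} hold verbatim, and then to assemble the per--block guarantees into a single probability via the union bound without degrading the stated constants. A secondary subtlety is the combinatorial identification of $\omega_\ell$ under exhaustive search, which hinges delicately on the spark condition (i) to rule out spurious low--rank groupings of signals from different subspaces.
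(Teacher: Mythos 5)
Your proposal follows essentially the same route as the paper's own proof: split the problem into per-block subproblems via one--block sparsity, use conditions (i)--(iii) as the DL-style uniqueness/clustering guarantees of \cite{aharon2006uniqueness} applied to the effective dictionary $\Atilde\Dmat$, invoke Theorem 1.1 of \cite{recht2009simpler} to complete each $\Ymat_{\omega_\ell}$ via the nuclear-norm program (\ref{eq:nuclear}), and then recover $\Dmat[\ell]$ and the coefficients through $\Atilde^{+}$ and an SVD. Your treatment is marginally more explicit on the rank-$k_\ell$ argument and on aggregating the per-block probabilities (a point the paper glosses over entirely), but the substance and all key ingredients are identical.
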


\noindent\begin{proof}[Proof of Theorem 1]
Condition (i)--(iii) are analogous to the DL uniqueness conditions in Definition 3, which are proven in \cite{aharon2006uniqueness}, and are always required even if all $\yvec_i$ are fully measured. Condition (i) is adapted to our setting by imposing the spark restriction on $\Atilde\Dmat$; this condition ensures that no two dictionary blocks are linearly dependent, and that the sensing matrices are sufficiently incoherent with respect to the dictionary. This is the case, with high probabilty, when $\Atilde$ obeys a random subgaussian or orthobasis construction \cite{candes2006cs,donoho2006cs}. 

Condition (ii) stems from the fact that any $k_\ell$--dimensional hyperplane is uniquely determined by $k_\ell+1$ vectors, and since a subspace will always contain the origin, only $k_\ell$ additional vectors are required. We write $|\omega_\ell|>k_\ell$ instead of $|\omega_\ell|\geq k_\ell$ because, in the setting of one--block sparsity, there exist $N=\sum_{\ell=1}^L |\omega_\ell|$ vectors from a union of $L$ different subspaces, and we need to be able to assign vectors to their respective subspaces.

If condition (iii) holds, which precludes spurious colinearities or coplanarities, then the assignment can be done by the (admittedly impractical) procedure of testing the rank of all possible $N \choose k_\ell+1$ matrices constructed by concatenating subsets of $k_\ell+1$ column vectors, as assumed in \cite{aharon2006uniqueness}. If the rank of such a matrix is $k_\ell$, then all its column vectors belong to the same subspace. Otherwise, the rank will be $k_\ell+1$. It is not necessary to know $k_\ell$ \emph{a priori}, provided that we test all possible values $k_\ell=1,\dots,n-1$ in the following way. Start by setting $k_\ell=1$ and testing all possible subsets containing $k_\ell+1=2$ signals. If any such subset has rank one, then we have found a (singleton) block. Any other possibility is precluded by the non--degeneracy condition (iii). If multiple subsets have rank one, then test the subspace angles in order to determine how many distinct singleton blocks exist. Record the corresponding blocks and signal assignments and remove the signals involved from further consideration (due to one--block sparsity, each signal belongs to only one block). Iterate this procedure until either: (a) we have exhausted the signals; (b) all $r$ atoms have been clustered into blocks, or (c) we have reached $k_\ell=n$.


We now address Condition (iv). We have $|\omega_\ell|>k_\ell$ vectors from subspace $\ell$ and we  assume that $\Atilde$ has rank $n$ and is, therefore, invertible. As discussed in Section \ref{sec:optim}, $\Ymat_{\omega_\ell}$ is an incomplete low--rank matrix. Under conditions studied in \cite{recht2009simpler}, the convex program (\ref{eq:nuclear}) will recover the complete $\Ymat_{\omega_\ell}$ with high probability. Therefore, by showing uniqueness of the complete low--rank matrix $\Ymat_{\omega_\ell}$, we can show the uniqueness of the corresponding subspace spanned by $\Dmat[\ell]$, which can be found by first solving the optimization (\ref{eq:nuclear}) and then taking the estimate $\widehat{\Ymat}_{\omega_\ell}$ and performing the SVD of $\Atilde^+ \widehat{\Ymat}_{\omega_\ell}$.

We now invoke Theorem 1.1 in \cite{recht2009simpler} which states that, under the conditions and with the probability specified in our theorem, the minimizer of problem (\ref{eq:nuclear}) is unique and equal to the true $\Ymat_{\omega_\ell}$. This concludes the proof.
\end{proof}

Note that the aforementioned theorem from \cite{recht2009simpler}, like preceding results \cite{candes2010power}, is based on $\Omega$ having at least one entry for each row and for each column; this is already accounted for in the derivation of the bound on $|\Omega|$ and associated probability of successful recovery under a uniformly--at--random pattern for the missing data. If there is a fixed $\Amat_i=\Amat$ (with less than $n$ rows), or if there is any row or column entirely missing from $\Omega$, then there is no recovery guarantee. This is the case of standard blind CS, where there is a fixed $\Amat$ which is the same for all signals, and thus there are entire rows without observations, as illustrated in Figure \ref{fig:matrixobs}. Hence, blind CS requires additional constraints on $\Dmat$, as explained in \cite{gleichman2010blind}. Using multiple $\Amat_i$ allows us to avoid those constraints. Also, even when $M\ge n$, we are still subsampling because we do not observe all entries of $\Ymat_{\omega_\ell}$.

The limiting case when all blocks are singletons, $i.e.$, $k_\ell=1$ for all $\ell$, coupled with the one--block sparsity assumption, means that each patch belongs to a straight line in $\mathbb{R}^n$, which is a very strong restriction. On the other hand, if all $k_\ell=1$ but the one--block sparsity assumption is removed and there are $K$ active blocks (singleton atoms when $k_\ell=1$), then we revert to standard sparsity; Theorem 1 still applies, but we would need to be test $r \choose K$ possible combinations of atoms, which may be an extremely large number. In fact, the conditions in Theorem 1, with standard sparsity and additionally with all $\Amat=\Imat_{n\times n}$, reduce to those in DL. One--block sparsity makes Theorem 1 more appealing, since then there are only $L$ possible combinations instead of $r \choose K$.

Under Theorem 1, recovery is contingent on having the correct clustering given by $\omega_\ell$. In the above--presented analysis, it is assumed that we have the computational ability to exhaustively search over all index sets $\omega_\ell$. Under the conditions of Theorem 1, this search will achieve the correct assignment of signals to subspaces (clustering), and we need only concern ourselves with vectors that all come from the same subspace. This is done only for the purpose of constructing a proof, following the same strategy as in \cite{aharon2006uniqueness}. Also, like the DL uniqueness result in \cite{aharon2006uniqueness}, Theorem 1 is overly pessimistic in the required number of measurements. Specifically, for image inpainting applications, where $\Ymat_{\omega_\ell}$ typically has far more columns then rows, the prescribed number of measurements can be excessive (often orders of magnitude larger than the total number of pixels). This reflects a limitation of the current state--of--the--art in matrix completion theory, and we stress that other uniqueness results, such as \cite{aharon2006uniqueness} for the simpler DL setting, suffer from the same problem. Nevertheless, Theorem 1 provides peace of mind by guaranteeing the existence of a unique optimal solution given enough measurements. It remains necessary to address the practical issue of finding a solution with reasonable computational effort, from realistic amounts of unclustered data.

Toward that end, in the next section we propose an algorithm that can estimate the clustering and each of the subspaces without combinatorial search, although it is not guaranteed to reach the globally optimum solution. The algorithm uses alternating least squares, similarly to BSDO, K--SVD and MOD. It is computationally efficient and enjoyes local convergence guarantees given much fewer measurements than those prescribed by Theorem 1. In Section \ref{sec:conv_proof} we show convergence of the algorithm to a local optimum, as long as specified algorithm--specific conditions hold.

\section{Algorithm}
\label{sec:multiple_A}
Our algorithm expands on the iterative procedure described in \cite{rosenblum2010dictionary}. This procedure alternates between two major steps: ($i$) inferring the block structure and coefficients using sparse agglomerative clustering (SAC) \cite{rosenblum2010dictionary} and block--orthogonal matching pursuit (BOMP) \cite{eldar2010block}, and ($ii$) updating the dictionary blocks using block--K--SVD (BK--SVD) \cite{rosenblum2010dictionary}. The latter step assumes fully measured data. Briefly, SAC progressively merges pairs of blocks $\ell_1,\ell_2$ for which the index sets $\omega_{\ell_1}$ and $\omega_{\ell_2}$ are most similar, up to an upper limit ($k_{\max}$) on the block size. BOMP sequentially selects the blocks that best match the observed signals $\yvec_i$ and can be viewed as a generalization of OMP \cite{pati1993orthogonal,mallat1993matching} for the block--sparse case. Similarly, BK--SVD is an extension of K--SVD for block--sparsifying dictionaries. A more detailed explanation of these methods may be found in \cite{rosenblum2010dictionary} and the references therein.

We follow the procedure in \cite{rosenblum2010dictionary}, but replace the BK-SVD step by our Algorithm 1 in order to take compressive measurements into account, as the block diagram in Figure \ref{fig:block_diagram} illustrates. We employ alternating least-squares minimization to find $\Dmat$ and $\Smat$ in a manner similar to existing methods such as K--SVD or MOD but adapted to the CS setting.

\begin{figure}
\centering
\includegraphics[width=2.5in]{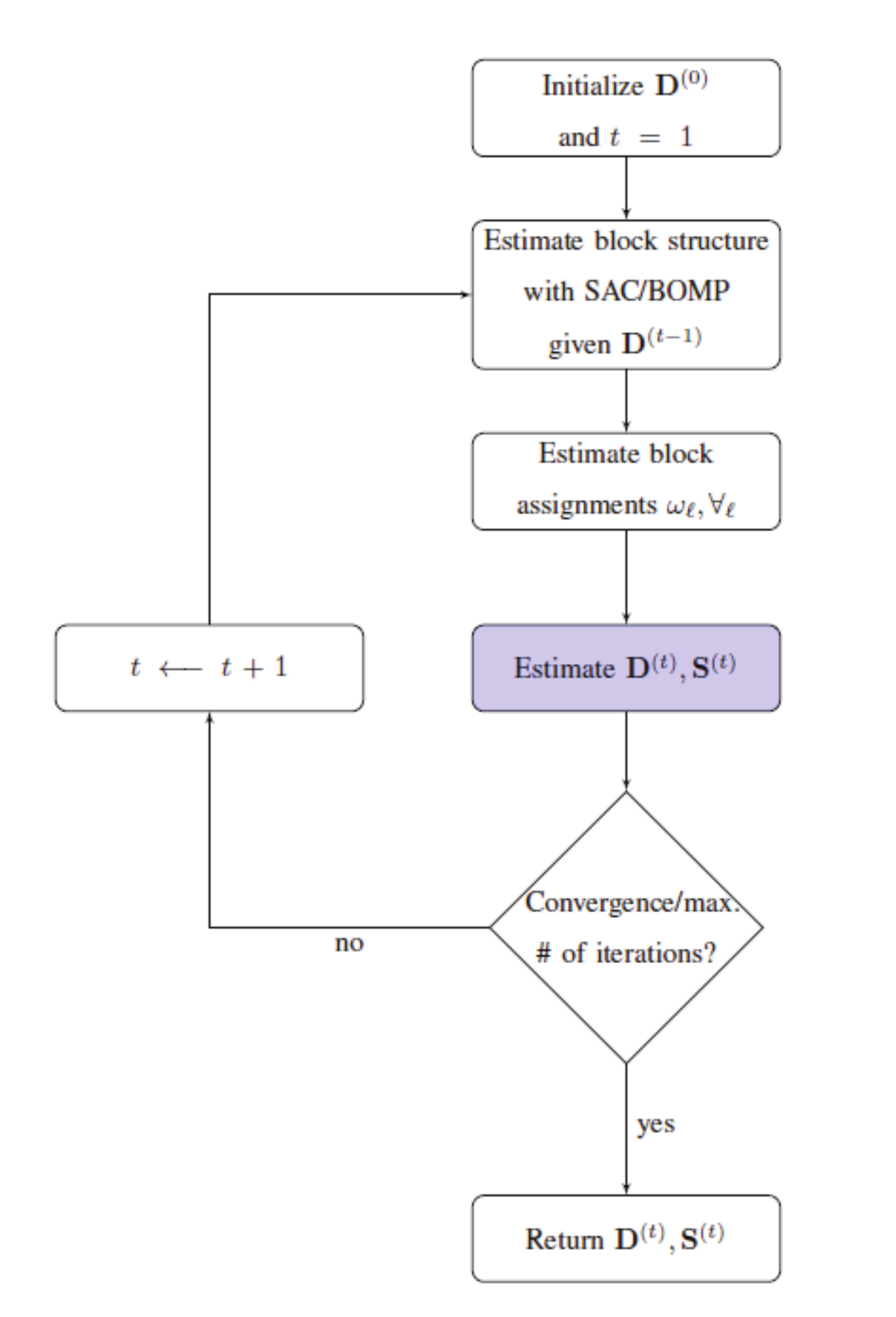}
\caption{Block diagram of the overall dictionary learning procedure. The analysis in this paper focuses on the shaded block, where we plug our Algorithm 1 instead of the BK-SVD step in \cite{rosenblum2010dictionary}.}
\label{fig:block_diagram}
\end{figure}

Recall that we are solving the optimization problem (\ref{eq:ADS_opt_ell}). Assuming the current estimate of $\omega_\ell$ and the block structure are fixed (as estimated in the previous SAC step), then we need to solve, for each $\ell$,
\begin{align}\label{Obj1}
     \min_{\Dmat[\ell],\Smat_{\omega_\ell}[\ell]} & \sum_{i\in \omega_\ell}\left\|\yvec_i-\Amat_{i}\Dmat[\ell]\svec_i[\ell]\right\|_2^{2}\\
	\textrm{s.t.} & ~~~~~\Dmat[\ell]^T\Dmat[\ell]=\Imat.\nonumber
\end{align}

\noindent The optimization alternates between $\Dmat[\ell]$ and $\Smat_{\omega_\ell}[\ell]$. For now we will set the constraint $\Dmat[\ell]^T\Dmat[\ell]=\Imat$ aside and focus on the unconstrained problem; we will return to the constraint later. A description of the two main algorithm steps is presented below, followed by the derivation of conditions for convergence.

\subsection{Dictionary update}
Holding $\svec_i$ fixed, we can analytically find $\Dmat[\ell]$ using properties of the Kronecker product and the vectorization operator, denoted as $\vect(\cdot)$. This is similar to MOD \cite{mairal2008discriminative} in that the entire dictionary block is updated at once. For a generic matrix $\Zmat$, $\vect(\Zmat)$ is the column vector constructed by vertically stacking the columns of $\Zmat$. For simplicity of presentation, and without loss of generality, assume that all $m_i=m$. Let $\Ytilde_{\omega_\ell}$ denote the matrix whose columns are the vectors $\yvec_i$, so that we have $|\vect(\Ytilde_{\omega_\ell})|=m|\omega_\ell|$. We rewrite the unconstrained problem over $\Dmat[\ell]$ as
\begin{align}\label{eq:vec}
\min_{\Dmat[\ell]}  \left\| \vect(\Ytilde_{\omega_\ell}) - \Bmat \vect(\Dmat[\ell]) \right\|_2^{2},
\end{align}
where $\Bmat$ is defined as
\begin{align}\label{eq:kronB}
	\Bmat=\left[\begin{array}{c} \svec_{i_1}[\ell]^T\otimes\Amat_{i_1} \\ \vdots\\ \svec_{i_{|\omega_\ell|}}[\ell]^T\otimes \Amat_{i_{|\omega_\ell|}}\end{array}\right],
\end{align}
with $\otimes$ denoting the Kronecker product. This is a least--squares optimization problem for a standard system of linear equations. The solution
\begin{align}
	\vect(\Dmat[\ell])=\Bmat^+ \vect(\Ytilde_{\omega_\ell}) \label{eq:D_update}
\end{align}
 is well--defined and unique (recall that we are holding the $\svec_i$, and therefore $\Bmat$, fixed) as long as $\Bmat$ has rank equal to $k_\ell n$, the number of unknowns in $\vect(\Dmat[\ell])$.

\subsection{Coefficient update}
After obtaining $\Dmat[\ell]$ for the current iteration, the coefficient vectors $\svec_{i}[\ell]$ are found analytically 
by least squares as
\begin{align}
     \svec_{i}[\ell] &= (\Dmat^T[\ell]\Amat_i^T\Amat_{i}\Dmat[\ell])^{-1}(\Dmat^T[\ell]\Amat_i^T\xvec_i).\label{eq:coeff_update}
\end{align}
\noindent The factor $(\Dmat^T[\ell]\Amat_i^T\Amat_{i}\Dmat[\ell])^{-1}$ is invertible under conditions speficied in our convergence result below.
\subsection{Orthogonality constraint}
The orthogonality condition $\Dmat[\ell]^T\Dmat[\ell]=\Imat$ is currently imposed \emph{a posteriori} by performing a SVD decomposition of $\Dmat[\ell]$ and applying the resulting unitary rotation matrix to the $\svec_i[\ell]$. This is equivalent to Gram-Schmidt orthogonalization, and will not change the locations of the nonzero elements of the $\svec_i$, although the blocks $\Dmat[\ell]$ will not have orthogonal columns during the iterative procedure. This procedure has worked well in practice, and we therefore do not pursue a more direct constrained minimization. 

In addition to the proposed Algorithm 1, we have also implemented and tested an alternative method where, instead of alternating least squares, we perform convex optimization of $\Dmat$ and $\Smat$ jointly (given the estimate of $\omega_\ell$ provided by the current SAC step) by completing matrix $\Ymat_{\omega_\ell}$, which is closer in spirit to the ideas underlying the uniqueness conditions in Theorem 1. However, existing general--purpose convex optimization packages do not readily scale up to matrices containing millions of entries as in our inpainting experiments, as noted in \cite{cai2008singular}. Using mixed nuclear/Frobenius norm minimization via the singular value thresholding (SVT) approach proposed in \cite{cai2008singular}, we have attained reconstruction performance comparable to our Algorithm 1, but at much higher computational cost. We have also found that successful convergence with this alternative SVT--based  approach depends on careful tuning of step--size and regularization parameters. In contrast, our alternating least squares method does not have any tuning parameters (other than the maximum block size $k_{\max}$) and is computationally far less demanding.

\begin{algorithm}
    \caption{-- Joint estimation of $\Dmat[\ell]$ and $\Smat_{\omega_\ell}[\ell]$ via alternating minimization}
    \begin{algorithmic}
         \STATE
         \STATE{\bf Initialization:} Use the estimates of $\omega_\ell$, $\Smat_{\omega_\ell}$ and the block structure of $\Dmat$ from the preceding SAC/BOMP step
         \STATE
         \FORALL {$\ell$}
            \STATE {--} Form the matrix $\Bmat=\left[\begin{array}{c} \svec_{i_1}[\ell]^T\otimes\Amat_{i_1} \\ \vdots\\ \svec_{i_{|\omega_\ell|}}[\ell]^T\otimes \Amat_{i_{|\omega_\ell|}}\end{array}\right]$

            \STATE{--} Update block $\ell$ of the dictionary by computing $\vect(\Dmat[\ell])=\Bmat^+ \vect(\Ytilde_{\omega_\ell})$
	\STATE{--} Orthogonalize $\Dmat[\ell]$
            \STATE{--} Update the coefficients: $\svec_{i}[\ell] = (\Dmat^T[\ell]\Amat_i^T\Amat_{i}\Dmat[\ell])^{-1}(\Dmat^T[\ell]\Amat_i^T\xvec_i), \forall i\in\omega_\ell$
         \ENDFOR
    \end{algorithmic}
    \label{alg:multiple_A}
\end{algorithm}

\section{Convergence result}
\label{sec:conv_proof}

The following result establishes convergence conditions for our proposed alternating least--squares algorithm.

\begin{propos}[{\bf Convergence of Algorithm 1}]
Let $\yvec_i\in\mathbb{R}^{m_i}$, with $i=1,\dots,N$ be a set of vectors that satisfy the conditions stated in Theorem 1. Let $\Amat_i\in\mathbb{R}^{m_i\times N}$ be sensing matrices and $\xvec_i\in\mathbb{R}^n$ signals such that $\yvec_i=\Amat_i\xvec_i$.  Assume each signal is of the form $\xvec_i=\Dmat\svec_i$, with $\Dmat$ a dictionary comprised of blocks $\Dmat[1],\dots,\Dmat[L]$, where each block has $k_\ell$ atoms, and $\svec_i\in\mathbb{R}^r$ a one--block--sparse vector. Let $\omega_\ell$ be the index set of vectors $\xvec_i$ from dictionary block $\ell$, and let $\Ymat_{\omega_\ell}\in\mathbb{R}^{M\times |\omega_\ell|}$ be an incomplete data matrix such that $P_\Omega(\Ymat_{\omega_\ell})=P_\Omega(\Atilde\Dmat[\ell]\Smat_{\omega_\ell}[\ell])$, with $\Omega$ the locations of observed entries, and  $\Atilde\in\mathbb{R}^{M\times n}$ the union of the rows of the $\Amat_i$. Then, the alternating minimization procedure described in Section \ref{sec:multiple_A} will converge to a local minimum of (\ref{eq:ADS_opt}), with probability specified below, if the following conditions hold:
\begin{itemize}
	\item[(i)] For all $\ell$ and all $i\in\omega_\ell$, $\|\svec_i\|_0=k_\ell<\frac{\sigma(\Atilde\Dmat)}{2}$.
	\item[(ii)] For the subset of vectors associated with each block $\ell$, we have $|\omega_\ell|\geq n$.
	\item[(iii)] The total number of observed values, $|\Omega|$, is $O(k_\ell n)$ if the elements of all $\Amat_i$ are i.i.d Gaussian, and $O(k_\ell n \log n)$ if the rows of $\Amat_i$ are selected uniformly at random from $\Atilde$ and $\textnormal{rank}(\Atilde)=n$.
	\item[(iv)] Each vector $\yvec_i$ has $m_i\geq k_\ell$ measured values.
\end{itemize}
The probability of convergence, for each block $\ell$, is equal to one with i.i.d. Gaussian $\Amat_i$, and equal to $1-n^{-\beta+1}$ for random selection of rows from $\Atilde$, where $\beta$ is the constant such that $|\Omega|=\beta n\log n$.
\end{propos}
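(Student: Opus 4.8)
The plan is to read Algorithm 1 as two--block Gauss--Seidel minimization applied, one block $\ell$ at a time, to the objective $f(\Dmat[\ell],\Smat_{\omega_\ell}[\ell])=\sum_{i\in\omega_\ell}\|\yvec_i-\Amat_i\Dmat[\ell]\svec_i[\ell]\|_2^2$ of (\ref{Obj1}), and to obtain convergence from monotonicity together with uniqueness of each closed--form update. First I would note that $f\geq 0$ is bounded below, and that each step of the inner loop is non--increasing in $f$: the dictionary update (\ref{eq:D_update}) is by construction the exact global minimizer of the equivalent least--squares problem (\ref{eq:vec}) over $\vect(\Dmat[\ell])$ with the coefficients held fixed; the coefficient update (\ref{eq:coeff_update}) is, since the objective decouples across $i$ once $\Dmat[\ell]$ is fixed, the exact per--vector least--squares minimizer; and the orthogonalization step replaces $(\Dmat[\ell],\svec_i[\ell])$ by $(\Dmat[\ell]\Tmat_\ell,\Tmat_\ell^{-1}\svec_i[\ell])$, which by Definition 2 leaves the product $\Amat_i\Dmat[\ell]\svec_i[\ell]$, and hence $f$, unchanged. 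The resulting sequence of objective values is therefore monotone non--increasing and bounded below, so it converges.

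To pass from convergence of the objective values to convergence of the iterates to a local minimum of (\ref{eq:ADS_opt}), I would invoke the standard convergence theory for two--block coordinate descent: when each of the two block subproblems has a \emph{unique} minimizer, every limit point of the iterate sequence is optimal in each block separately, i.e.\ a coordinate--wise (local) minimizer of $f$. Uniqueness of the two updates is exactly the requirement that (a) $\Bmat$ have full column rank $k_\ell n$, so that $\Bmat^+$ in (\ref{eq:D_update}) is a genuine left inverse, and (b) each Gram matrix $(\Amat_i\Dmat[\ell])^T(\Amat_i\Dmat[\ell])=\Dmat^T[\ell]\Amat_i^T\Amat_i\Dmat[\ell]$ in (\ref{eq:coeff_update}) be nonsingular. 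The remainder of the proof verifies (a) and (b) under conditions (i)--(iv), with the stated probabilities.

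For (b), the Gram matrix is invertible iff $\Amat_i\Dmat[\ell]\in\mathbb{R}^{m_i\times k_\ell}$ has full column rank $k_\ell$; this is guaranteed by $m_i\geq k_\ell$ (condition (iv)) together with the spark condition (i) on $\Atilde\Dmat$, which ensures that $\Amat_i$ does not annihilate any nontrivial combination of the $k_\ell$ columns of $\Dmat[\ell]$. For (a), I would begin with the dimension count: since $\Bmat\in\mathbb{R}^{m|\omega_\ell|\times k_\ell n}$, full column rank requires $m|\omega_\ell|\geq k_\ell n$, and conditions (ii) ($|\omega_\ell|\geq n$) and (iv) ($m\geq k_\ell$) supply exactly this, which is why the proposition strengthens the $|\omega_\ell|>k_\ell$ of Theorem 1 to $|\omega_\ell|\geq n$. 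That the count is actually attained is where the probabilistic content lies. When the entries of $\Amat_i$ are i.i.d.\ Gaussian, each strip $\svec_i[\ell]^T\otimes\Amat_i$ is almost surely of rank $m$ and, the $\Amat_i$ being independent, the stacked $\Bmat$ attains full column rank with probability one; this gives the ``probability equal to one'' figure and the $O(k_\ell n)$ measurement count of condition (iii).

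The main obstacle is the random--row--selection case, in which the rows of each $\Amat_i$ are drawn uniformly from $\Atilde$ (with $\mathrm{rank}(\Atilde)=n$) and full column rank of $\Bmat$ must be certified from a sampling argument rather than from genericity. My plan is to tie this directly to the matrix--completion sampling machinery already used in the proof of Theorem 1: with $|\Omega|=\beta n\log n$ observed entries, the coupon--collector--type estimates of \cite{recht2009simpler} guarantee that every one of the $n$ coordinate directions of $\Atilde$ is sampled often enough for the induced operator to preserve rank, the bad event having probability at most $n^{-\beta+1}$, which is precisely the $1-n^{-\beta+1}$ success probability claimed. The delicate points I anticipate are confirming that the logarithmic oversampling needed to keep $\Bmat$ full--rank scales as $O(k_\ell n\log n)$ rather than a larger power of $n$, and handling the coupling between the random supports of $\svec_i[\ell]$ and the random row selections; I would discharge both by reusing the concentration bounds already invoked in Theorem 1 rather than re--deriving them, so that condition (iii) for each of the two sensing models yields the corresponding stated probability.
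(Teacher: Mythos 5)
Your skeleton matches the paper's proof: each update of Algorithm 1 is the exact minimizer of its subproblem, so the objective is non--increasing and bounded below by zero, and the real work is certifying that the two closed--form updates are well defined, \emph{i.e.}, that $\Bmat$ in (\ref{eq:kronB}) has full column rank $k_\ell n$ and that $\Dmat^T[\ell]\Amat_i^T\Amat_i\Dmat[\ell]$ is invertible (the latter via $m_i\geq k_\ell$, exactly as in the paper). However, your certification of the rank of $\Bmat$ has two genuine gaps. First, in the Gaussian case you assert that independence/genericity of the $\Amat_i$ alone makes $\Bmat$ full column rank almost surely. That is false: if all the $\svec_i[\ell]$ were collinear, say $\svec_i[\ell]=c_i\,\mathbf{v}$, then $\Bmat=[v_1\Wmat~\cdots~v_{k_\ell}\Wmat]$ where $\Wmat$ stacks the $c_i\Amat_i$, so $\textrm{rank}(\Bmat)\leq n < k_\ell n$ no matter how generic the $\Amat_i$ are. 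The paper avoids this through a reduction you omit: using $\textrm{rank}(\svec_i[\ell]^T\otimes\Amat_i)=\textrm{rank}(\svec_i[\ell])\cdot\textrm{rank}(\Amat_i)$ together with the non--degeneracy hypothesis $\textrm{rank}(\Smat_{\omega_\ell}[\ell])=k_\ell$ (inherited from the conditions of Theorem 1), it shows that full column rank of $\Bmat$ is equivalent to having $|\Omega|\geq k_\ell n$ \emph{and} $\textrm{rank}(\Gammamat)=n$, where $\Gammamat$ is the vertical concatenation of the $\Amat_i$. This decoupling of the coefficients from the sensing matrices is precisely what resolves the ``coupling between the random supports of $\svec_i[\ell]$ and the random row selections'' that you flag as delicate but never actually discharge: after the reduction, all randomness sits in $\Gammamat$ alone.

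Second, your plan to obtain the $1-n^{-\beta+1}$ probability in the row--subsampling case ``by reusing the concentration bounds already invoked in Theorem 1'' cannot work as stated. Theorem 1 invokes the nuclear--norm completion guarantee of \cite{recht2009simpler}, which concerns exact recovery of an incoherent low--rank matrix from uniformly sampled entries; it contains no statement about a stack of randomly selected rows of $\Atilde$ spanning $\mathbb{R}^n$. The paper instead gives a short, self--contained coupon--collector computation (citing \cite{feller1968introduction}): the probability that a given row of $\Atilde$ is never drawn among the $m|\omega_\ell|$ selections is at most $(1-1/n)^{m|\omega_\ell|}$; a union bound over the $n$ rows and the inequality $1-x\leq e^{-x}$ with $m|\omega_\ell|=\beta n\log n$ give failure probability at most $n^{-\beta+1}$. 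You name the right tool and arrive at the right answer, but the derivation required is this elementary one, which your proposal defers to a theorem that does not contain it. A final, more minor point: you conclude that limit points of the iterates are coordinate--wise minimizers, which requires boundedness of the iterate sequence that you do not establish; the paper claims only convergence of the objective values and explicitly acknowledges, immediately after the proof, that this is the weaker statement.
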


\begin{proof}[Proof of Proposition 1] Since condition (i) is the same as in Theorem 1, we focus on conditions (ii)--(iv). As mentioned above, the analytic solution (\ref{eq:D_update}) is well--defined and unique as long as $\Bmat$ has rank equal to $k_\ell n$.  Examining (\ref{eq:vec}) and (\ref{eq:kronB}) we can see that, if the $\svec_i[\ell]$ are non--degenerate ($i.e.$, rank($\Smat_{\omega_\ell}[\ell]$)=$k_\ell$, with $\Smat_{\omega_\ell}[\ell]$ the subset of nonzero rows of $\Smat_{\omega_\ell}$), then the rank condition is equivalent to having the number of observed elements $|\vect(\Ytilde_{\omega_\ell})|=|\Omega|\geq k_\ell n$ and also
\begin{align}
	\Gammamat=\left[\begin{array}{c}\Amat_{i_1}\\ \vdots\\ \Amat_{i_{|\omega_\ell|}} \end{array}\right],
\end{align}
that is, the vertical concatenation of all $\Amat_i$, having rank $n$. This is due to the fact that $\textrm{rank}(\svec_i \otimes \Amat_i)=\textrm{rank}(\svec_i)\times\textrm{rank}(\Amat_i),\forall i$. The number of observed values needed to ensure this rank condition depends on the probabilistic mechanism that generates the $\Amat_i$. We will analyze two such mechanisms: random Gaussian and random subset of projections. Without loss of generality, assume all $m_i=m$ (we can always treat $m$ as a lower limit on the $m_i$).

\noindent{\bf Random Gaussian:}
We assume that all $\Amat_i$  have elements independently drawn from a Gaussian distribution. Then, with probability one, any subset of size $n$ of the $m|\omega_\ell|$ rows of $\Gammamat$ are linearly independent. Therefore, the condition $m|\omega_\ell|\geq n$ ensures $\textrm{rank}(\Gammamat)=n$; since we need $m|\omega_\ell|\geq k_\ell n$ (and therefore $|\omega_\ell|>n$) in order to have enough observed entries anyway, the latter condition ensures that the rank of $\Gammamat$ is adequate. Note that this reasoning assumes that the $\Amat_i$ do not share rows; if they do share rows, \emph{e.g.}, in the case when there is a finite pool of random--Gaussian projections and the rows of $\Amat_i$ are subsets from the pool, then we must use the following result instead.

\noindent{\bf Random subset of projections:}
We analyze the case for which the rows of each sensing matrix $\Amat_i$ are randomly drawn from a pool of linearly independent rows, the union of which constitutes $\Atilde$. The typical example of this situation is when the rows of $\Atilde$ form a random orthobasis, although here orhonormality is not required (linear independence suffices). This includes the case when $\Atilde$ is the identity matrix, which is most relevant to the inpainting/interpolation problem. Note that the problem can be treated as an instance of the classic Coupon Collector problem (see, \emph{e.g.}, \cite{feller1968introduction}). There are $m|\omega_\ell|$ rows (the number of rounds our ``collector'' draws a randomly chosen row, or ``coupon'') that have to span a space of $n$ dimensions (the different types of ``coupons'' we need to collect). Denote as $Z_i^{m|\omega_\ell|}$ the event that no coupons of type $i$ have been drawn after $m|\omega_\ell|$ rounds. It is easy to see that
\begin{equation}
    P[Z_i^{m|\omega_\ell|}]\leq\left(1-\frac{1}{n}\right)^{m|\omega_\ell|},
\end{equation}
where the inequality comes from our ignoring, for simplicity, the fact that the rows are drawn without replacement within each sensing matrix (stated otherwise, there are no repeated rows within each $\Amat_i$). Then, by the union bound,
\begin{equation}
	P\left[\bigcup_{i=1,\dots,n}Z_i^{m|\omega_\ell|}\right] \leq n \left(1-\frac{1}{n}\right)^{m|\omega_\ell|}.
	\label{eq:bound}
\end{equation}
The left--hand side is precisely the probability that $\Gammamat$ is not rank $n$. To look at how this bound scales relative to $n$, we can apply the inequality $1-x\leq e^{-x}$ to obtain $\left(1-\frac{1}{n}\right)^{m|\omega_\ell|}\leq e^{-\frac{m|\omega_\ell|}{n}}$. Then, consider $m|\omega_\ell|=\beta n\log n$, with $\beta$ constant, and define $T$ to be the minimum value of $m|\omega_\ell|$ before we achieve full rank, so that $P[T>m|\omega_\ell|]=P[T>\beta n\log n]=P\left[\bigcup_{i=1,\dots,n}Z_i^{m|\omega_\ell|}\right]$. We have
\begin{align}
	P[T>\beta n\log n] \leq n \left(1-\frac{1}{n}\right)^{m|\omega_\ell|} \leq n  e^{-\frac{m|\omega_\ell|}{n}} = n e^{-\beta n \log n/n}=n^{-\beta+1}.
\end{align}

\noindent Therefore, we need $O(n\log n)$ observed values in order to achieve rank $n$ with arbitrarily high probability. Hence, we need $O(n k_\ell \log n)$ for solving (\ref{eq:vec}) for each block. Although we do not impose the orthogonality contraint directly in the minimization, any solution of  (\ref{eq:vec}) can be orthonormalized and thus become a solution of (\ref{Obj1}) \emph{without changing the subspace} defined by $\Dmat[\ell]$.

Concerning the coefficient update expression (\ref{eq:coeff_update}), the issue of invertibility comes up due to the term $(\Dmat^T[\ell]\Amat_i^T\Amat_{i}\Dmat[\ell])^{-1}$. Note that the matrix being inverted has size $k_\ell\times k_\ell$. The rank of $\Amat_i^T\Amat_{i}$ is equal to $m_i$, the number of rows of $\Amat_i$. For invertibility of the above expression, we need to have $k_\ell \leq m_i$, so that there is no loss of rank. This means that our method requires a lower bound on $m_i$.

We conclude the proof by invoking, as in \cite{aharon2006uniqueness}, the fact that all of the steps of this alternating minimization method are optimal under the above assumptions, and hence cannot increase the value of the objective function. Therefore, since the objective function is bounded from below by zero, the algorithm will converge.
\end{proof}

We have proven convergence in the objective function, which is a weak form of convergence in the sense that the objective function will attain a minimum but the estimates might not reach a stopping point. However, convergence in the objective function is still a useful result, and it is the same guarantee as K--SVD and MOD. In practice, the above result ensures that the required number of signals scales linearly with $k_\ell$ and also linearly (up to a log factor in the case of measurements using an orthobasis ensemble) with signal dimensionality $n$. These are more favorable bounds than those in Theorem 1 for uniqueness. However, this is not surprising since Theorem 1 is based on matrix completion theory, which establishes slightly pessimistic sufficient conditions for strong unique global recovery guarantees, while Proposition 1 only establishes convergence of our algorithm to a local minimum. It is also interesting that, in the case of random subsets of projections, the $O(k_\ell n\log n)$ requirement matches the information--theoretic limit established in \cite{candes2010power} for matrix completion. This similarity is due to the fact that, as in \cite{candes2010power} and other work, we make use of the Coupon Collector model. Regarding the lower bound on $m_i$, since we typically expect the block size to be small, the condition $m_i\geq k_\ell$ is relatively mild and should only be of concern for extremely small measurement fractions. 

\section{Experimental results}
\label{sec:results}
The algorithm proposed in Section \ref{sec:multiple_A} is validated by inpainting the well--known ``Barbara'' ($512\times 512$ pixels) and ``house'' ($256 \times 256$ pixels) images, for varying percentages of observed pixels (these two examples are representative of many others we have considered). The images are processed in $8\times 8$ overlapping patches, treated as vectors of dimension $n=64$. In Figures \ref{fig:barb} and \ref{fig:house} we present the original images and the test versions, with 25\%, 50\% and 75\% of the pixel values observed (selected uniformly at random). In all experiments, the total number of dictionary elements is set to $r=256$. Figures \ref{fig:barbara512_8} and \ref{fig:house_4} show the inpainting results achieved by our algorithm from 50\% observed pixels, using maximum block sizes $k_{\max}=4$ and $k_{\max}=8$. The peak signal--to--noise ratios (PSNR) for this case are shown in Table \ref{tab:psnr}.
\begin{table}[h]
	\caption{Peak signal--to--noise ratios (PSNR) in inpainting tasks with 50\% observed pixels}
	\centering
	\begin{tabular}{lccc}
		& $k_{\max}=4$ & $k_{\max}=8$\\ \hline
		``Barbara"  & 27.68 dB & 27.93 dB\\
		``House"  & 31.80 dB & 32.03 dB
	\end{tabular}
	\label{tab:psnr}
\end{table}

\noindent The PSNR values for varying percentages of observed pixels are plotted in Figure \ref{fig:psnr}, for $k_{\max}$ equal to four and eight. Results are averaged over ten runs with different random locations for the missing pixels, with error bars showing the one standard deviation interval. These experiments are intended to validate our method, rather than claiming outperformance of the state--of--the--art in image inpainting and interpolation. Our performance is comparable to that of the algorithm described in \cite{zhou2009non}, which we have used on the same images. However, our model and algorithm are significantly simpler and, unlike \cite{zhou2009non}, our algorithm has convergence guarantees. Also, while the PSNR achieved in our experiments is lower than in the state--of--the--art approach in \cite{yu2010solving}, the results are not directly comparable due to the fact that additional structure is assumed in \cite{yu2010solving}, including fast decay of the singular values within each block (which is analogous to internal sparsity).

Like other authors (\emph{e.g.}, \cite{wrightL1}), we have observed a phase transition phenomenon where the reconstruction performance falls sharply for measurement percentages under a rank--dependent threshold (near 40\% observed pixels in our case). This is shown in Figure \ref{fig:phase}, which was obtained using as input subsets of pixels from a union--of--subspaces approximation of the ``Barbara'' image for block sizes four and eight, thus matching our model exactly. We plot the empirical frequency of successful reconstructions over ten realizations of the missing pixel locations, for varying measurement percentages. The reconstruction is deemed successful when the PSNR exceeds 40 dB.

In addition, the learned dictionaries for $k_{\max}=4$ and $k_{\max}=8$ are depicted in Figures \ref{fig:dict_barbara} and \ref{fig:dict_house}. It is possible to observe that the atoms within each dictionary block are more similar to each other than those in different blocks. This is expected and desired clustering behavior, as one would like the blocks to reflect different image properties. This behavior is further illustrated in Figure \ref{fig:usage}, where the index of the block used by each patch is represented by a corresponding color (to avoid clutter, this is shown for the most frequently used dictionary blocks). It is apparent that patches with similar texture tend to share blocks.

In all cases tested, convergence occurs after a maximum of ten iterations. A non--optimized MATLAB implementation of Algorithm 1, including the SAC/BOMP steps, requires approximately 12 hours for inpainting the full $512\times 512$ ``Barbara'' image (worst case) on a computer with CPU clock frequency of 2.52 GHz. This computation time is similar to the BSDO algorithm described in \cite{rosenblum2010dictionary}. We employ overlapping patches in order to avoid block artifacts, leading to a total of $N=255,025$ vectors being processed for the ``Barbara'' image, with $N=62,001$ for the ``house'' image. Note that the overlapping procedure follows the same ``non--overlapping sensing with overlapping reconstruction'' strategy as \cite{yu2010statistical,zhou2009non}; this procedure employs a distinct sensing matrix per \emph{non--overlapping} patch, with the image reconstruction performed by averaging the \emph{overlapping} estimated patches.

\begin{figure*}[htb!]
    \centering
    \begin{tabular}{c}
    \includegraphics[width=4in]{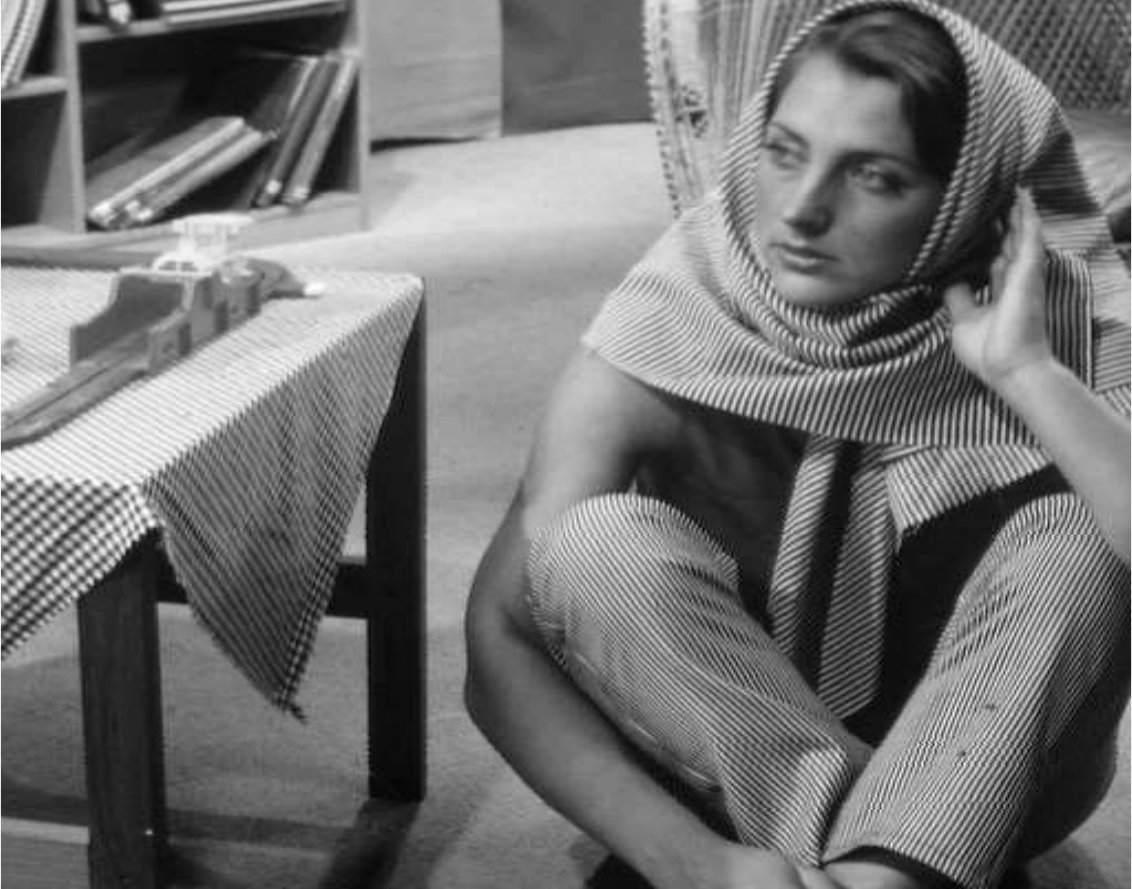} \\
    \includegraphics[width=2in]{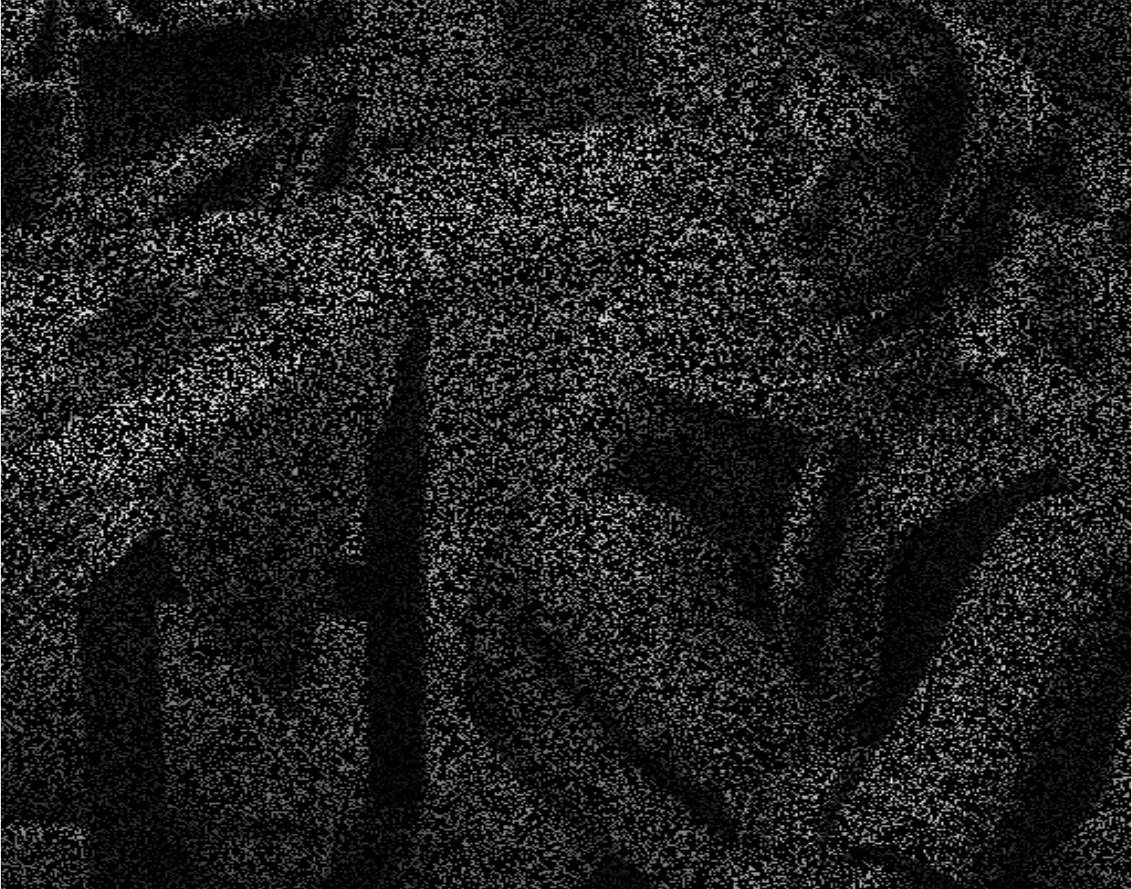} 
    \includegraphics[width=2in]{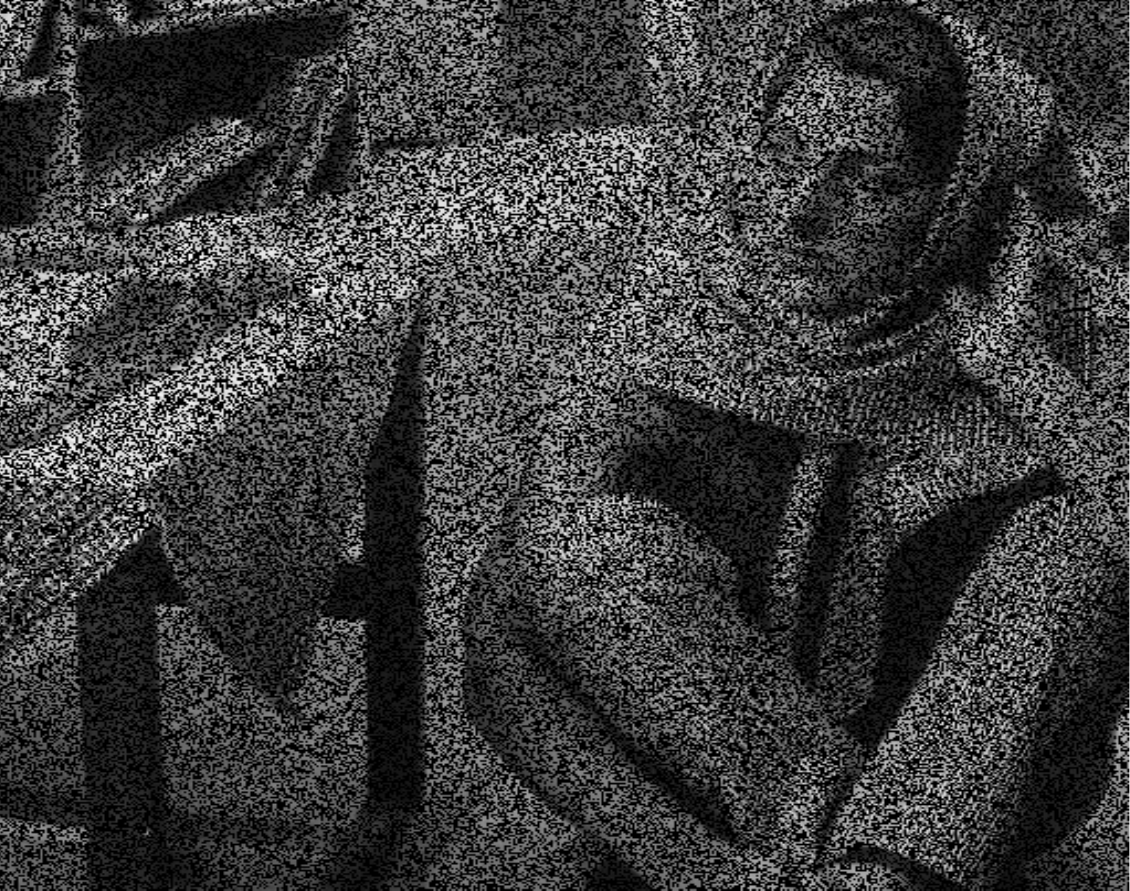} 
    \includegraphics[width=2in]{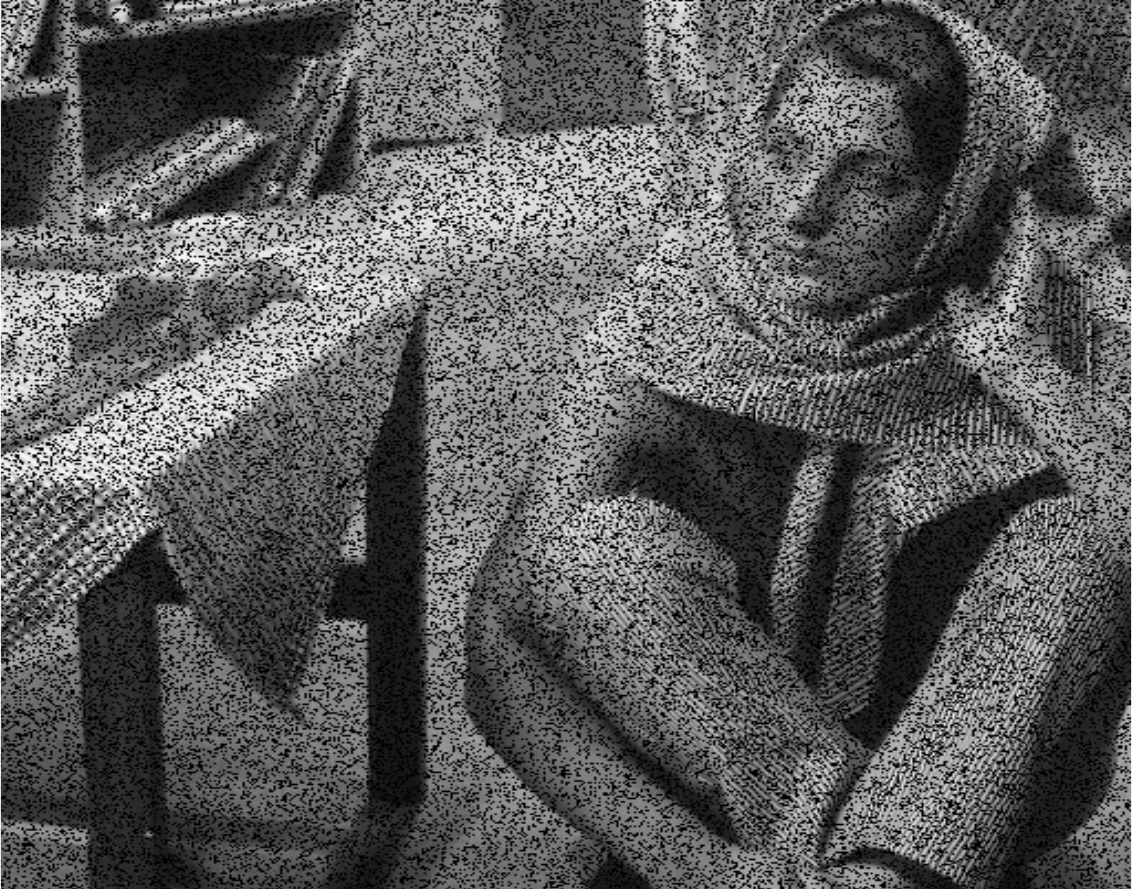} 
    \end{tabular}
    \caption{Top: original $512\times 512$ ``Barbara" image. Bottom, left to right: test versions with 25\%, 50\% and 75\% observed pixel values (the remainder are removed).}
    \label{fig:barb}
\end{figure*}

\begin{figure*}[htb!]
    \centering
    \begin{tabular}{c}
    \includegraphics[width=4in]{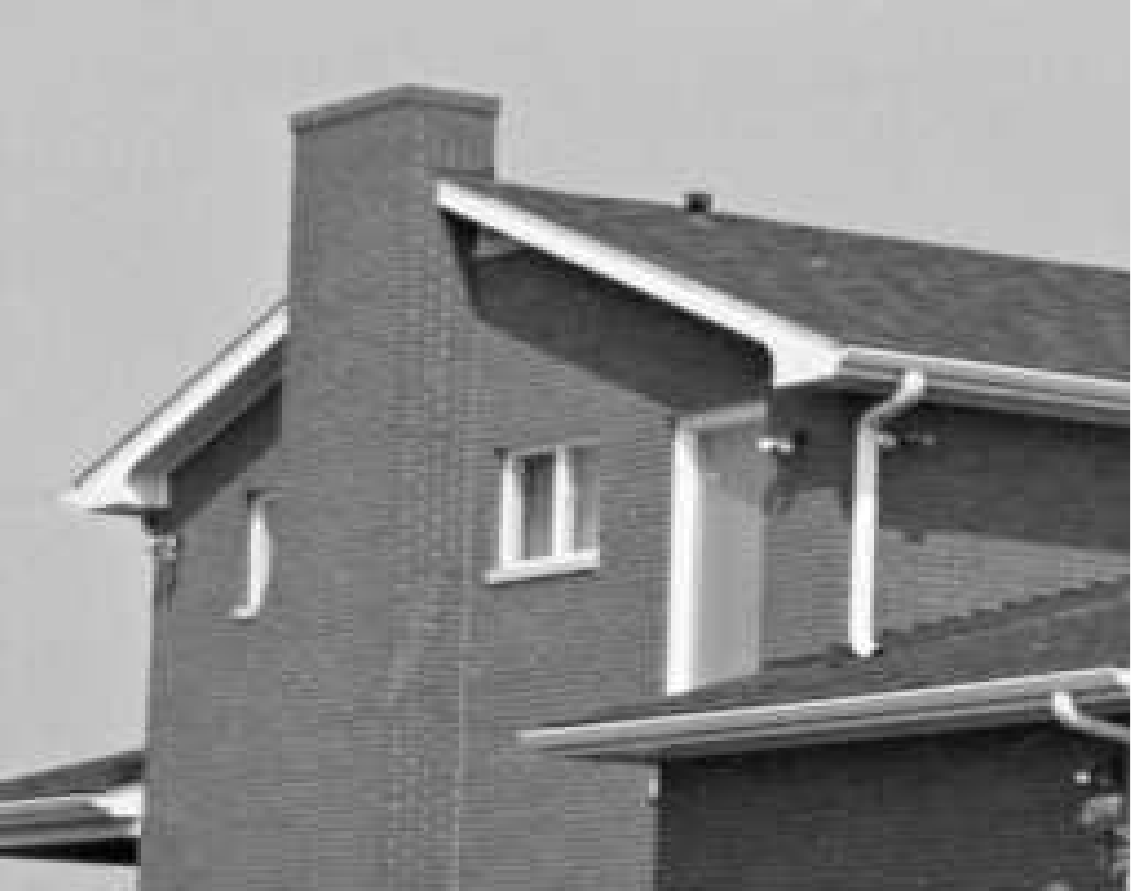} \\
    \includegraphics[width=2in]{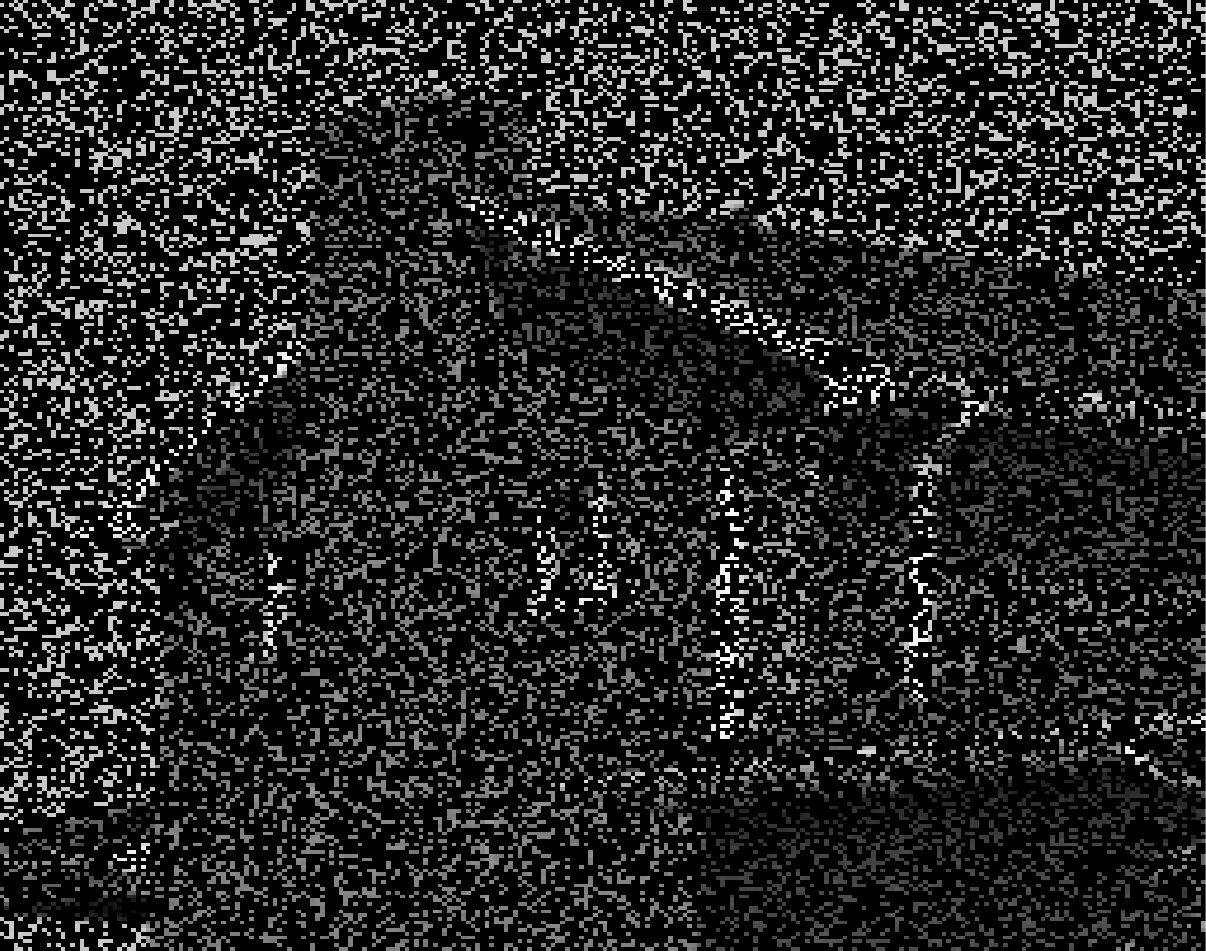} 
    \includegraphics[width=2in]{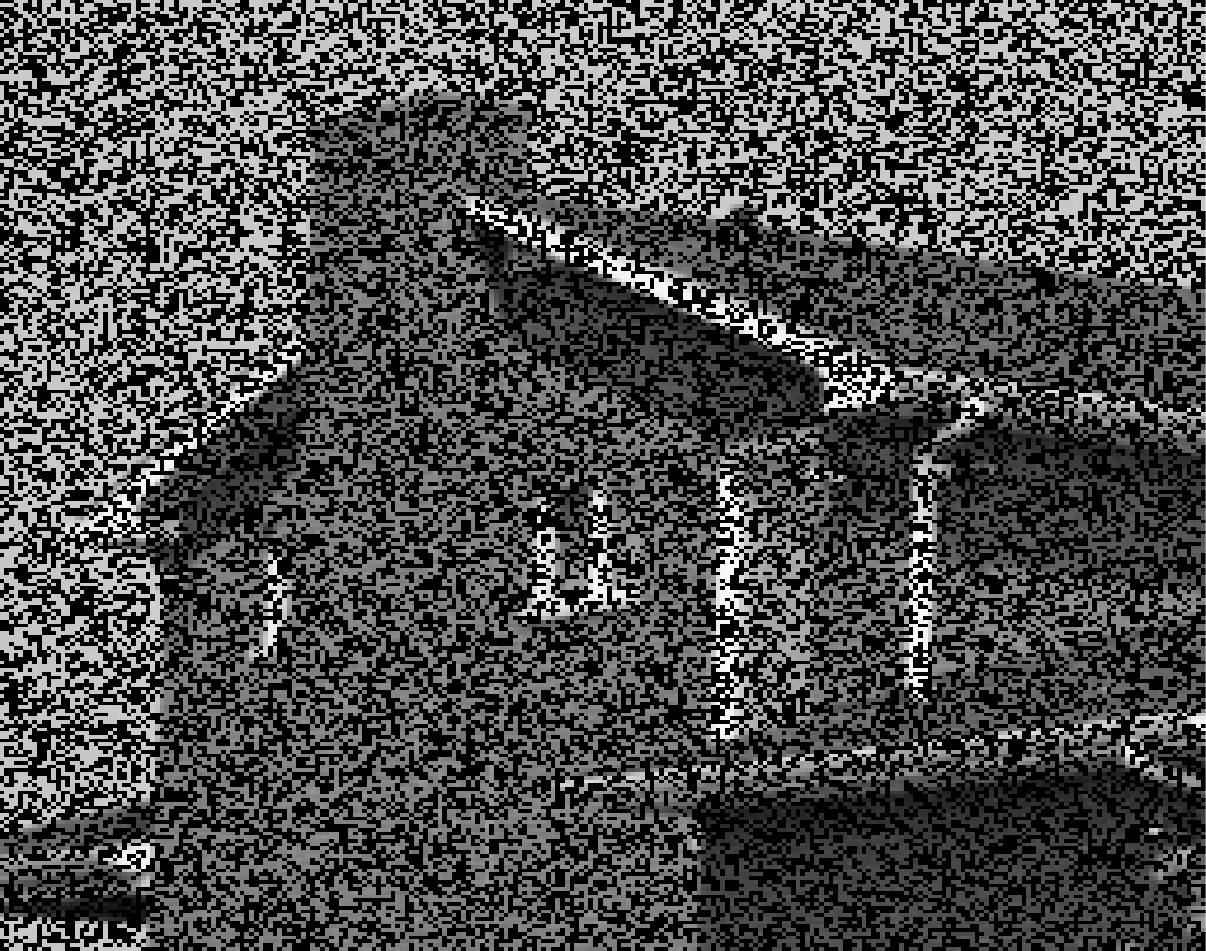} 
    \includegraphics[width=2in]{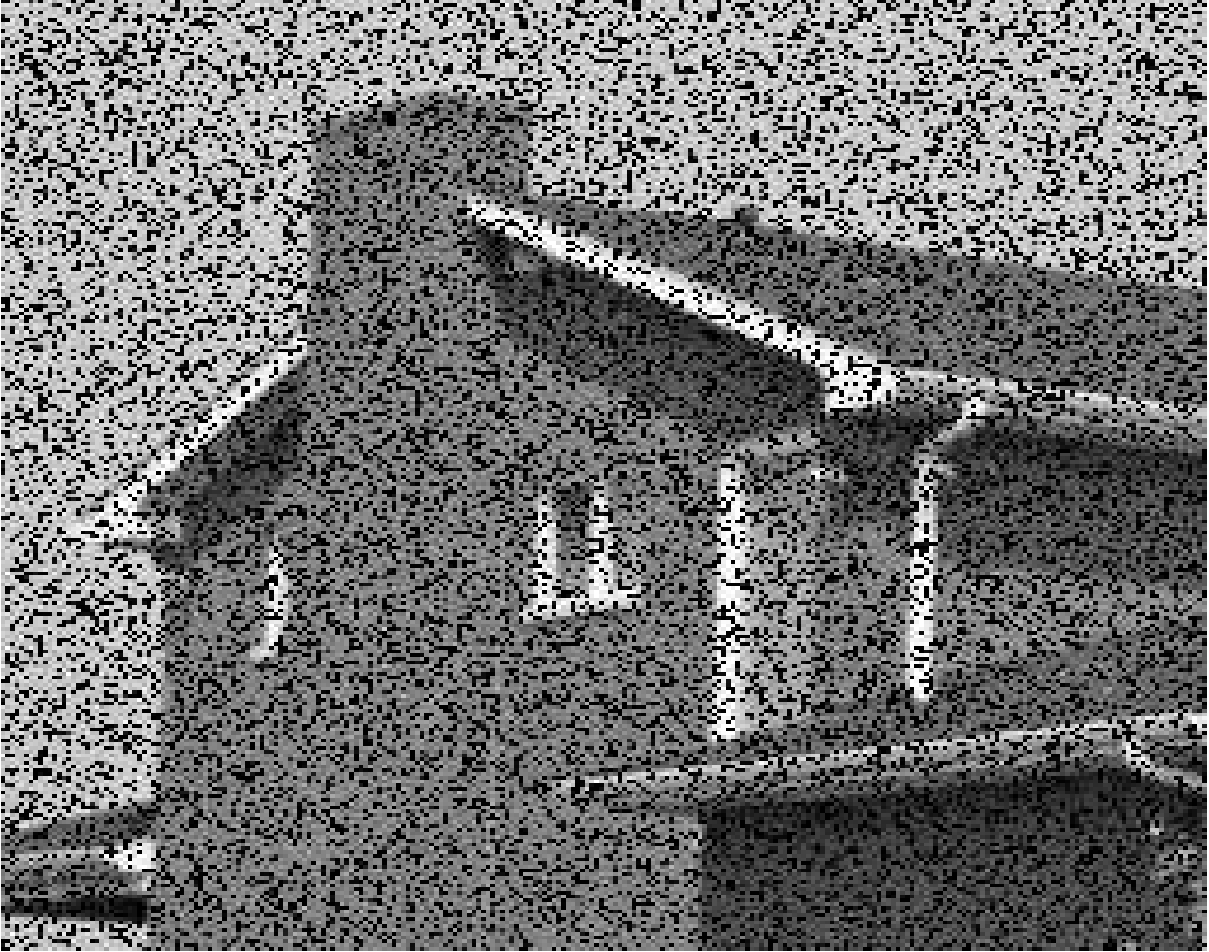} 
    \end{tabular}
    \caption{Top: original $256\times 256$ ``house" image. Bottom, left to right: test versions with 25\%, 50\% and 75\% observed pixel values (the remainder are removed).}
    \label{fig:house}
\end{figure*}

\begin{figure*}[htb!]
    \centering
    \includegraphics[width=4in]{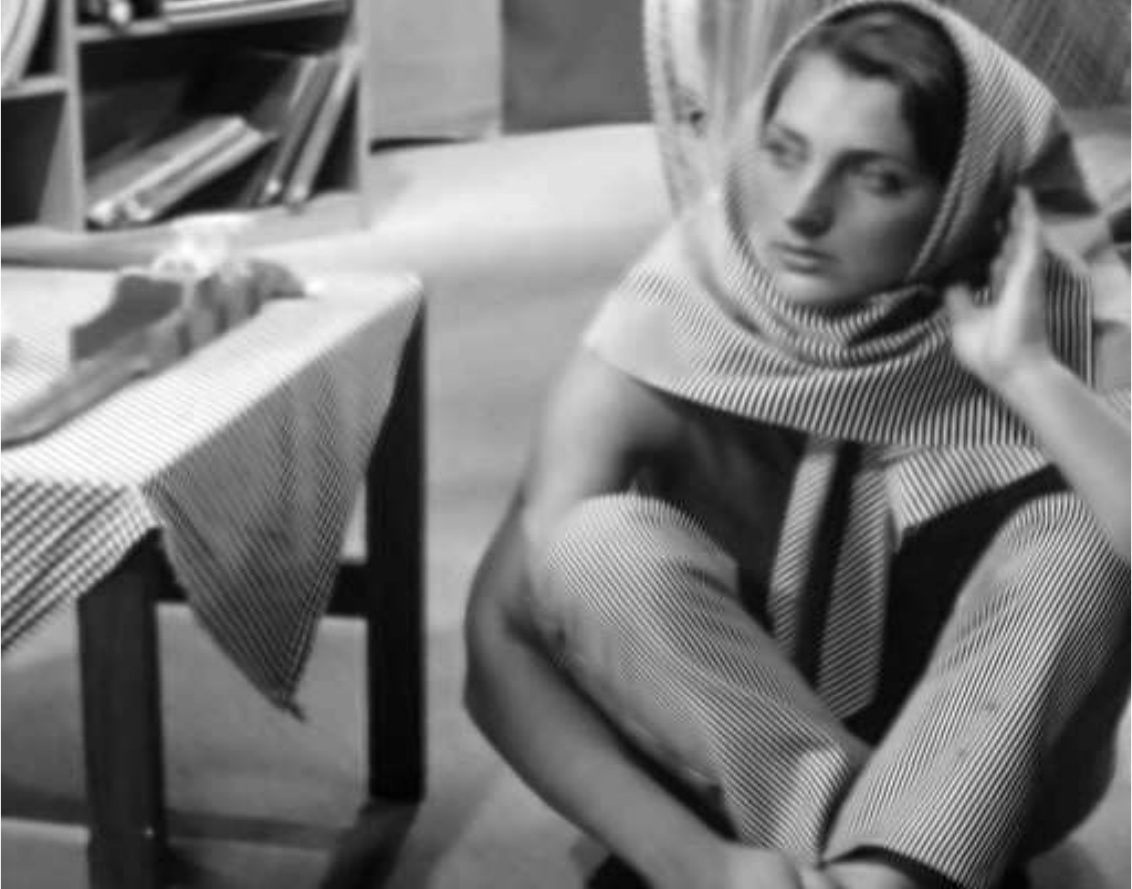}
    \caption{Inpainted $512\times 512$ ``Barbara" image with 50\% observed pixels. The peak signal--to--noise ratio (PSNR) of this estimate is 27.93 dB. Maximum block size ($k_{\max}$) is eight and the total of dictionary elements is set to $r=256$, divided in $L=32$ blocks. Image patches have size $8\times 8$ and are treated as vectors of dimension $n=64$. As we employ overlapping patches, the total number of vectors is $N=255,025$.}
    \label{fig:barbara512_8}
\end{figure*}

\begin{figure*}[htb!]
    \centering
    \includegraphics[width=4in]{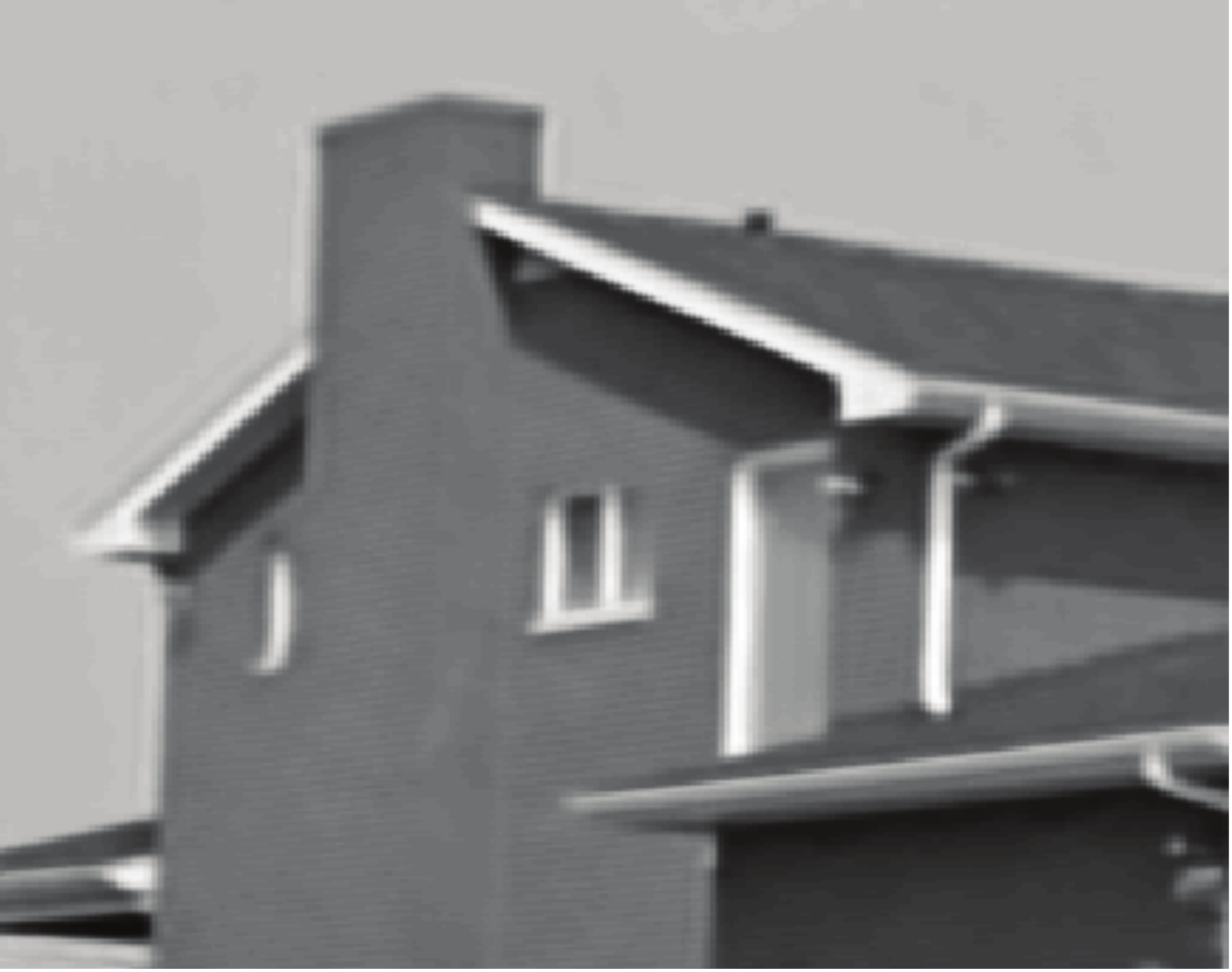}
    \caption{Inpainted $256\times 256$ ``House" image with 50\% observed pixels. The peak signal--to--noise ratio (PSNR) of this estimate is 31.80 dB. Maximum block size ($k_{\max}$) is four and the total of dictionary elements is set to $r=256$, divided in $L=64$ blocks. Image patches have size $8\times 8$ and are treated as vectors of dimension $n=64$. As we employ overlapping patches, the total number of vectors is $N=62,001$.}
    \label{fig:house_4}
\end{figure*}

\begin{figure*}[htb!]
    \centering
    \includegraphics[width=3in]{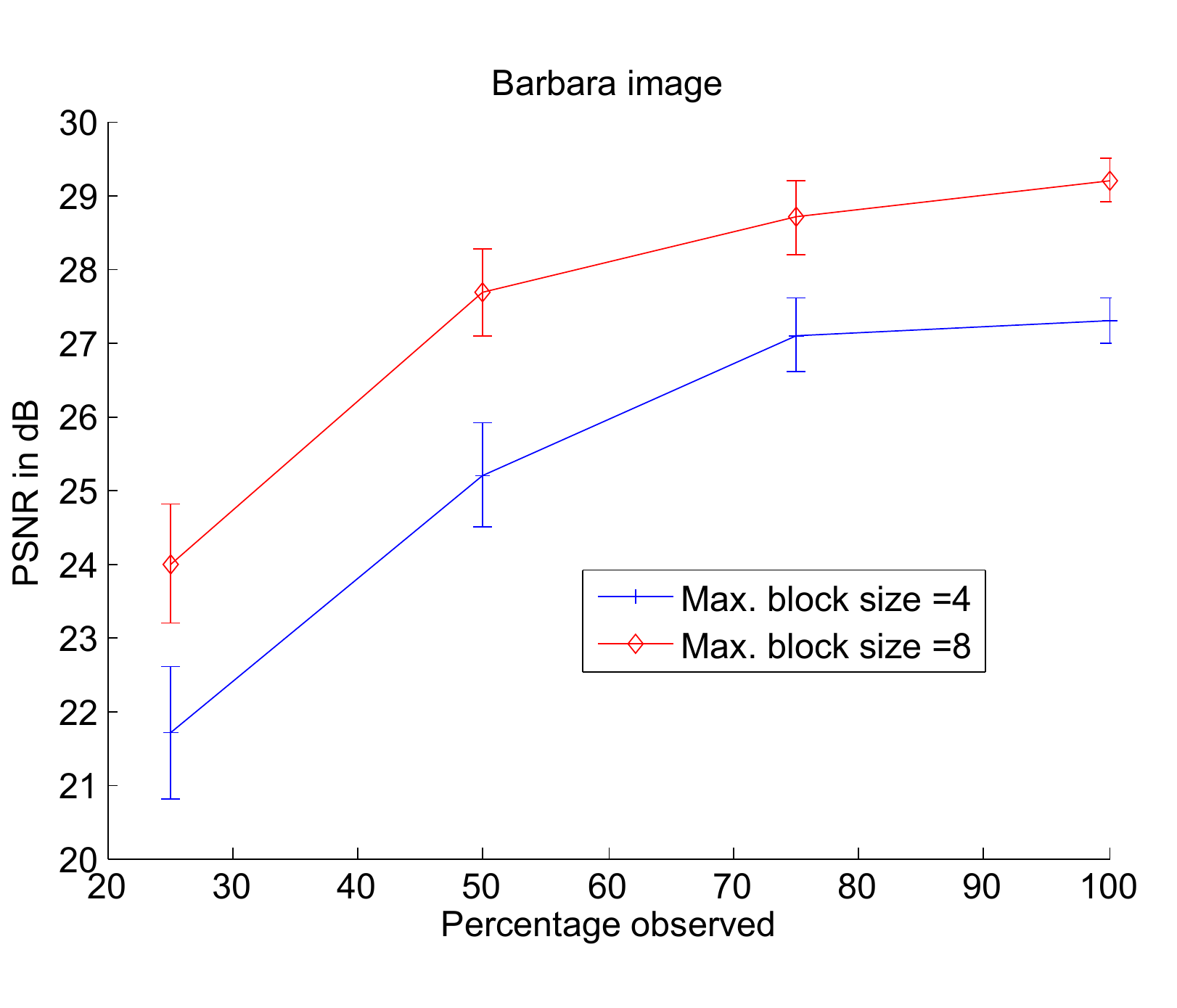} \includegraphics[width=3in]{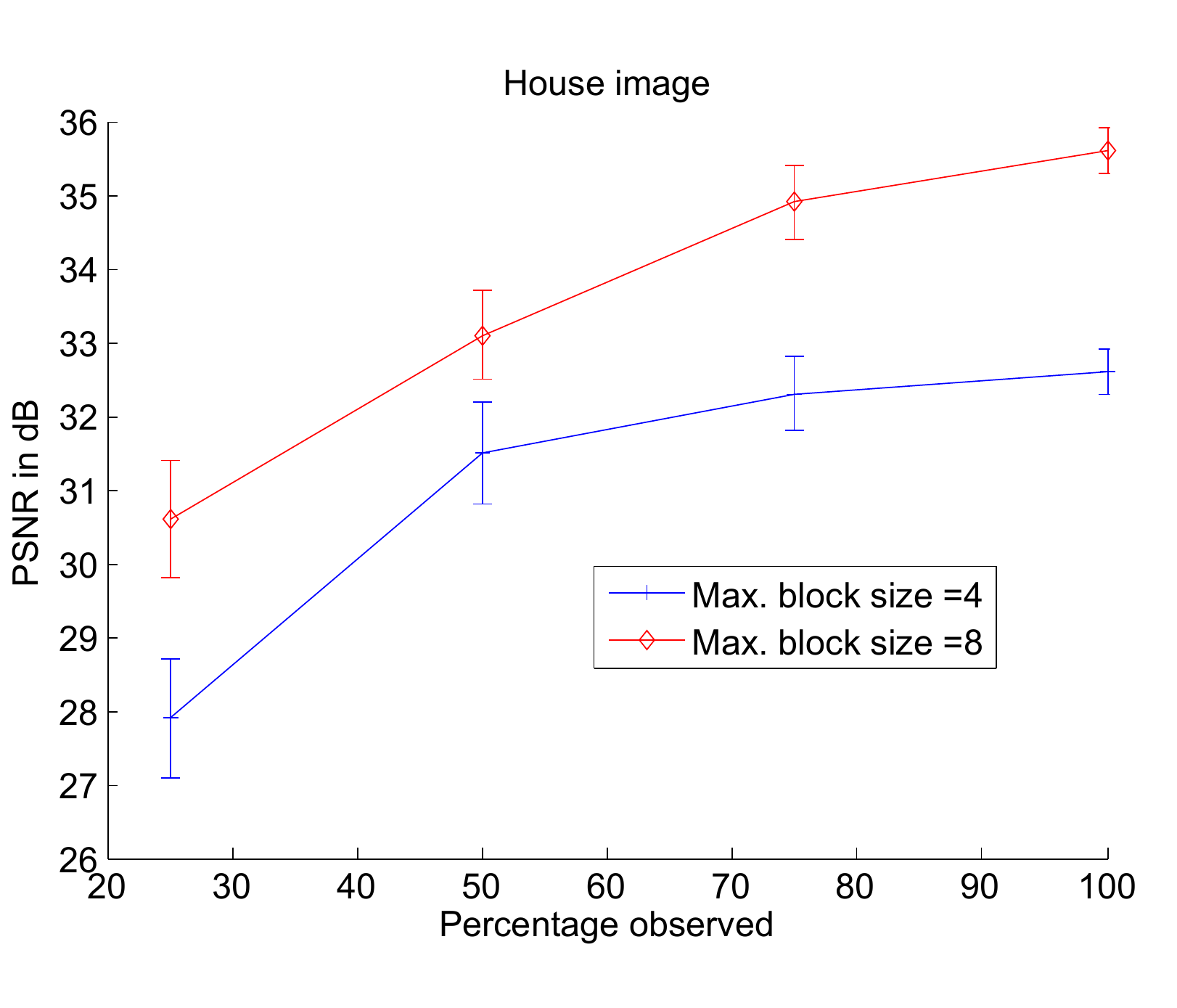}
    \caption{Peak signal--to--noise ratio (PSNR) for inpainting the ``Barbara'' ({\bf left}) and ``house'' ({\bf right}) images with 25\%, 50\%, 75\% and 100\% observed pixel values, using maximum block size ($k_{\max}$) of four and eight. Results are averaged over ten realizations of the randomly missing pixel locations. The error bars depict one standard deviation. Using higher $k_{\max}$ yields better PSNR (but also a less parsimonious model).}
    \label{fig:psnr}
\end{figure*}

\begin{figure}[htb!]
    \centering
    \includegraphics[width=3in]{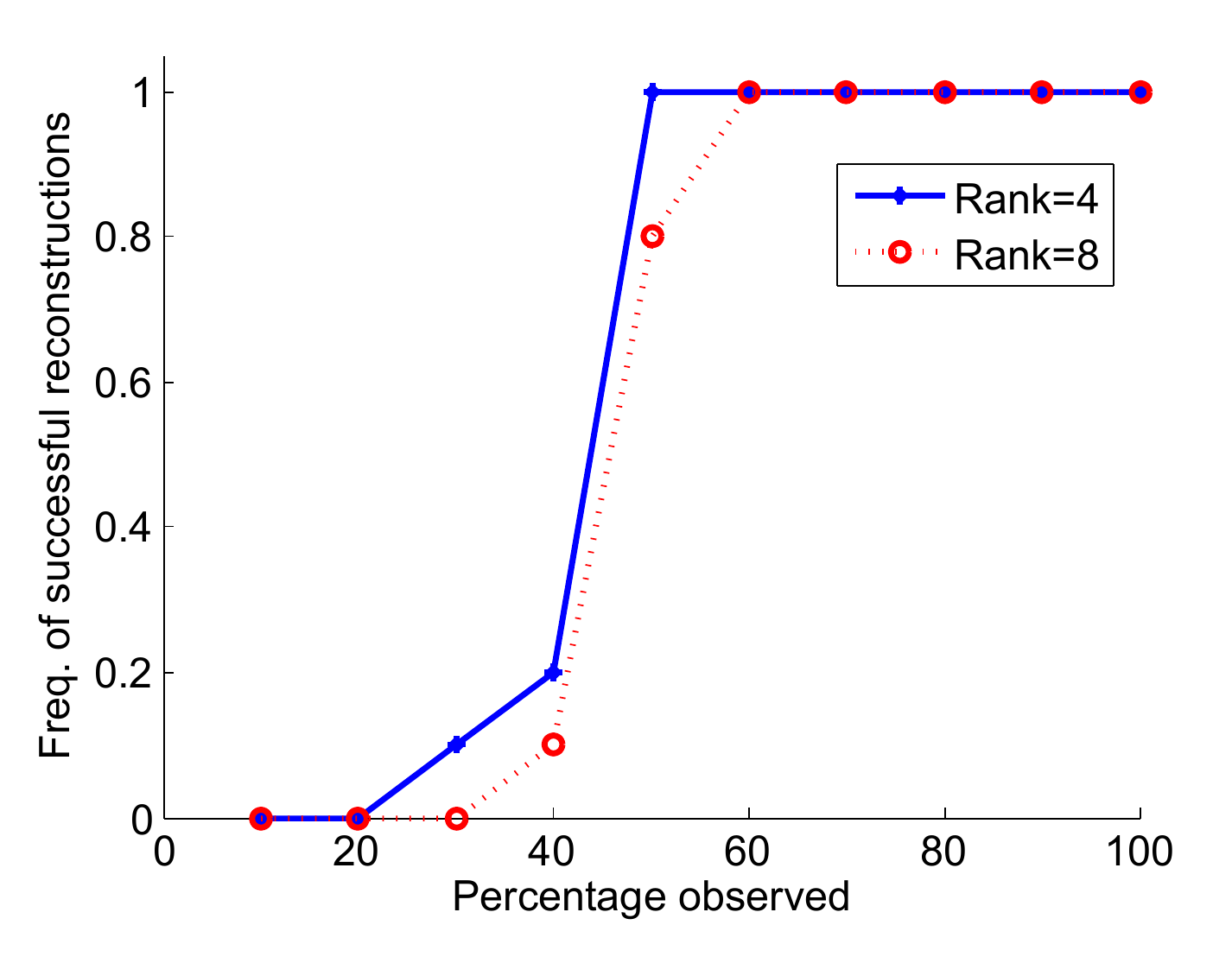} \\
    \caption{Illustration of the phase transition phenomenon, using as input subsets of pixels from an approximation of the ``Barbara'' image that exactly obeys the union--of--subspaces model, with block size (rank) four and eight. The plot shows the normalized frequency of successful reconstructions over ten realizations of missing pixel locations, for varying percentages of observed pixels. We declare a successful reconstruction when the PSNR of the estimate is greater than 40 dB. }
    \label{fig:phase}
\end{figure}

\begin{figure*}[htb!]
    \centering
    \begin{tabular}{c}
    \includegraphics[width=6in]{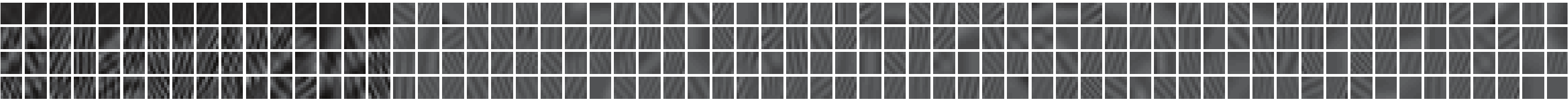} \\
    (a)\\
    \includegraphics[width=3in]{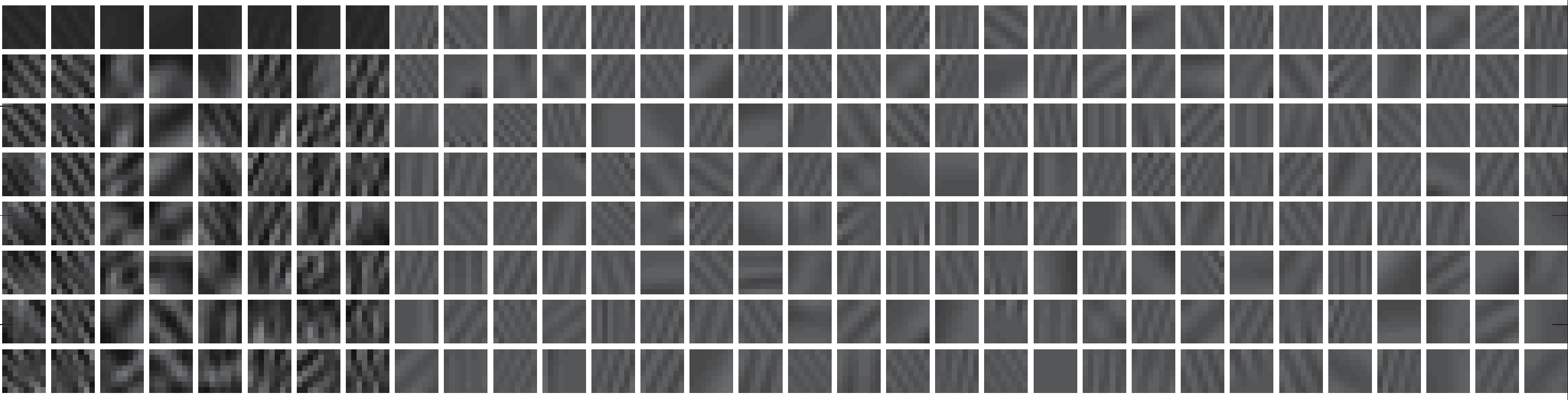}\\
    (b)
    \end{tabular}
    \caption{Learned dictionaries with Algorithm 1 for the ``Barbara'' image using $k_{\max}=4$ (a) and $k_{\max}=8$ (b). Each atom is shown as a $8\times 8$ image. The atoms are arranged in a $k_{\max} \times L$ matrix, with all atoms in each column belonging to the same dictionary block.}
    \label{fig:dict_barbara}
\end{figure*}

\begin{figure*}[htb!]
    \centering
    \begin{tabular}{c}
    \includegraphics[width=6in]{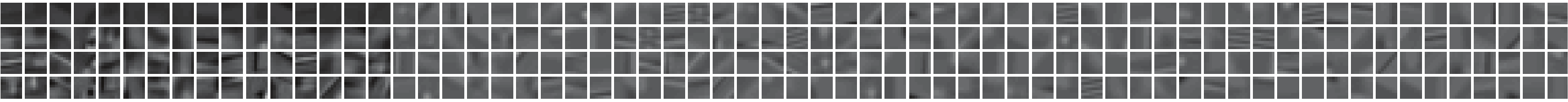} \\
    (a)\\
    \includegraphics[width=3in]{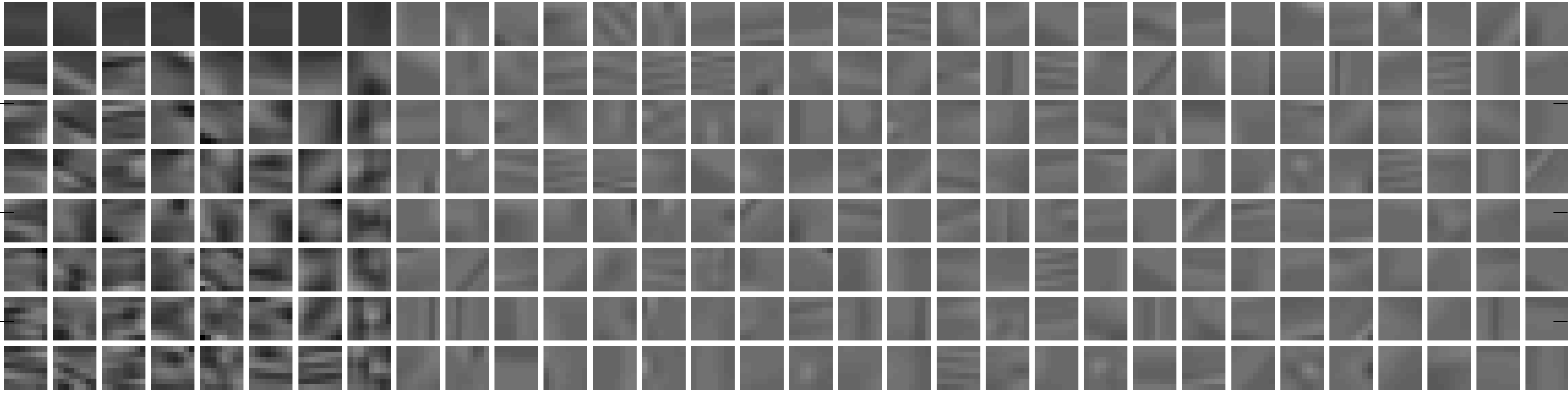}\\
    (b)
    \end{tabular}
    \caption{Learned dictionaries for the ``house'' image using $k_{\max}=4$ (a) and $k_{\max}=8$ (b). Each atom is shown as a $8\times 8$ image. The atoms are arranged in a $k_{\max} \times L$ matrix, with all atoms in each column belonging to the same dictionary block.}
    \label{fig:dict_house}
\end{figure*}

\begin{figure*}[htb!]
    \centering
    \includegraphics[width=4in]{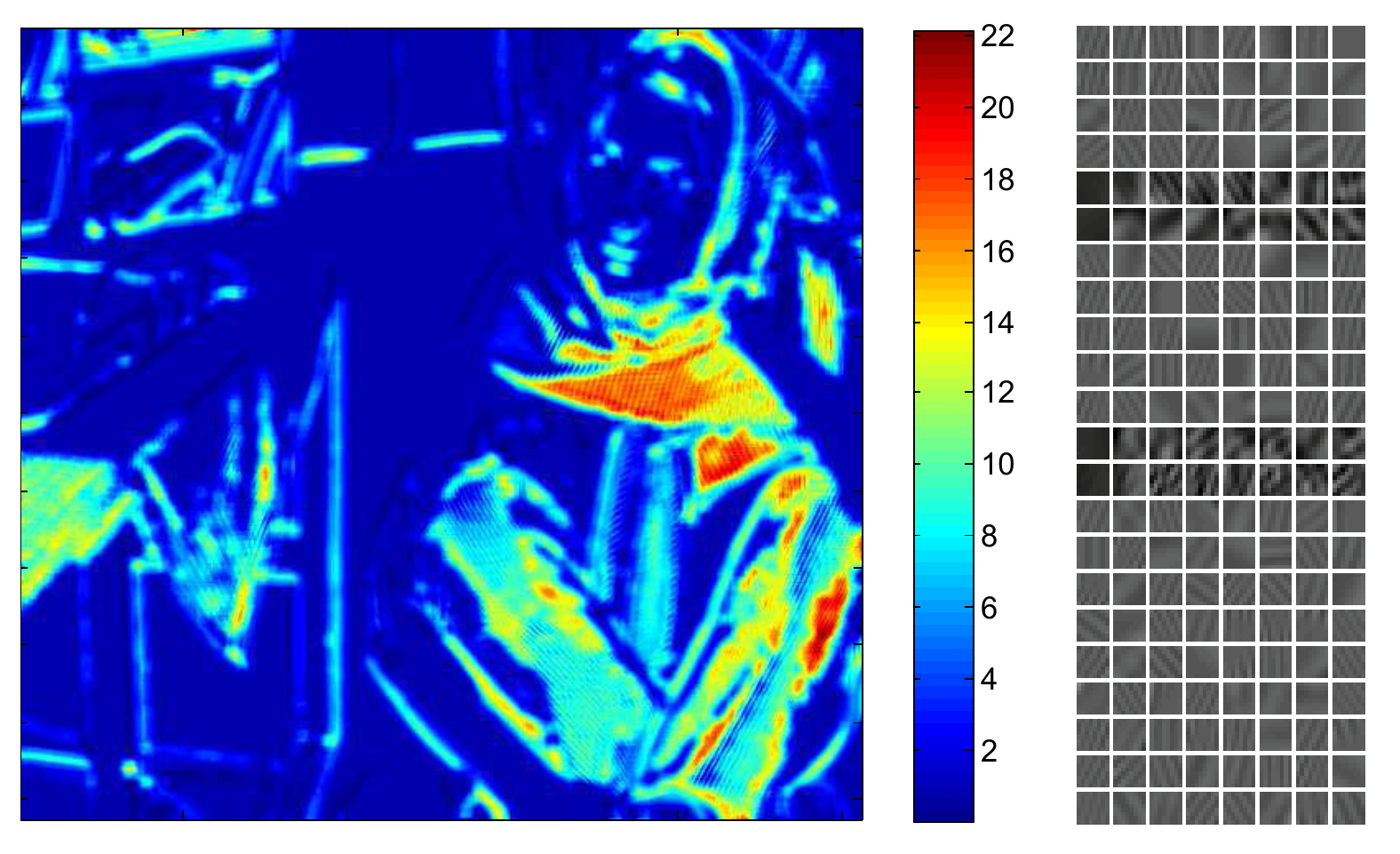} \\
    \caption{Dictionary block usage for the ``Barbara'' image with $k_{\max}=8$. The colors represent the index of the block used by each patch, for a subset containing the 22 most frequently used blocks, out of a total of 32. The subset is shown to the right, aligned with the indices of the colorbar, with each block corresponding of a row of subimages (atoms). It is apparent that patches with similar textures (\emph{e.g.}, the pants, portions of the scarf and table cloth) tend to share blocks (best viewed in color).}
    \label{fig:usage}
\end{figure*}

\section{Conclusion}
\label{sec:conclusion}
We have proposed a framework for simultaneous estimation of one--block--sparse signals and the corresponding dictionary, based on compressive measurements. Multiple sensing matrices are employed, which allows use of low--rank matrix completion results to guarantee unique recovery (up to a specified equivalence class) for a sufficiently large number of signals and measurements per signal; bounds are derived for the number of measurements. The assumption of one--block sparsity is related to the union--of--subspaces signal model and to the well--known mixture of factor analyzers (MFA) statistical framework. Existing results for the related problems of dictionary learning (DL) and block--sparse dictionary optimization (BSDO) are extended, due to the analysis of compressive measurements, and blind compressed sensing (blind CS) is also extended by considering a broader class of dictionaries. 

Additionally, a practical algorithm based on alternating least--squares minimization has been proposed. The algorithm is related to BSDO, and has been shown to converge to a local optimum under mild conditions. We observe encouraging performance on image inpainting tasks without the need for parameter tuning or careful initialization. An avenue for further research is the treatment of observation noise, which constitutes a natural extension of our work, as well as extension of the theory beyone one--block sparsity.

\bibliography{bsdd}
\bibliographystyle{plain}
\end{document}